\documentclass[11pt]{article}

\usepackage[margin=1in]{geometry}
\usepackage{microtype}

\usepackage{amsmath,amssymb,amsthm,mathtools}

\usepackage{algorithm}
\usepackage{algorithmic}

\usepackage{booktabs}
\usepackage{multirow}
\usepackage{graphicx}
\usepackage{subcaption}
\usepackage{pdflscape}  % For landscape tables if needed

\usepackage[numbers]{natbib}
\usepackage[hidelinks]{hyperref}
\hypersetup{colorlinks=false}

\newtheorem{theorem}{Theorem}[section]
\newtheorem{proposition}[theorem]{Proposition}

\newtheorem{corollary}[theorem]{Corollary}
\newtheorem{remark}{Remark}[section]
\newtheorem{definition}{Definition}[section]

\newcommand{\bx}{\mathbf{x}}
\newcommand{\by}{\mathbf{y}}
\newcommand{\bV}{\mathbf{V}}
\newcommand{\bF}{\mathbf{F}}
\newcommand{\bG}{\mathbf{G}}

\newcommand{\beps}{\boldsymbol{\epsilon}}
\newcommand{\bphi}{\boldsymbol{\phi}}
\newcommand{\E}{\mathbb{E}}
\newcommand{\R}{\mathbb{R}}
\newcommand{\kbar}{\bar{k}}
\newcommand{\pboltz}{p_{\text{Boltz}}}
\newcommand{\pdata}{p_{\text{data}}}
\newcommand{\sg}{\operatorname{sg}}
\newcommand{\kBT}{k_BT}

\newcommand{\keywords}[1]{\par\noindent\textbf{Keywords:} #1}

\title{Drifting to Boltzmann:\\ Million-Fold Acceleration in Boltzmann Sampling with Force-Guided Drifting}

\author{Pipi Hu\thanks{\texttt{hpp1681@gmail.com}} \\
  Beijing Institute of Mathematical Sciences and Applications, Beijing 101408, China \\}

\date{}

\begin{document}
\maketitle

% ============================================================================
%  ABSTRACT
% ============================================================================
\begin{abstract}
Sampling molecular conformations from the Boltzmann distribution is essential for computational chemistry, but iterative diffusion methods are prohibitively slow.
Drifting Models offer one-step generation, yet their equilibrium matches the \emph{training} distribution, which may deviate from the true Boltzmann distribution due to sampling bias.
We introduce Drifting Models to molecular conformation generation for the first time, establishing a theoretical bridge via the \emph{Drifting Score Identity}: for Gaussian kernels, the drifting field's attraction equals a kernel-weighted average of \emph{any} distribution's score function.
Substituting molecular force labels---which directly encode the Boltzmann score---yields the \emph{Drifting Force Identity} and decomposes the field into standard drift plus a Boltzmann correction.
We further discover a striking phenomenon unique to molecular systems: force incorporation's effectiveness \emph{reverses across representations}.
In coordinate space, Force-Interpolated Drifting (FI) dominates by blending physical force directions with data displacements.
In distance feature space, Force-Aligned Kernel (FK) achieves superior accuracy by modifying only kernel weights, thereby preserving the manifold of geometrically valid molecules.
On MD17 Ethanol, both approaches achieve one-step generation with over 1000x speedup relative to recent score-matching diffusion with Boltzmann guiding, providing more than million-fold acceleration over traditional molecular dynamics, while ensuring perfect structural validity and distributional accuracy rivaling multi-step methods.
\end{abstract}

\keywords{Generative models, Molecular conformation sampling, Boltzmann distribution, Drifting models, Force-guided learning, One-step generation}

% ============================================================================
%  1. INTRODUCTION
% ============================================================================
\section{Introduction}
\label{sec:intro}

Sampling molecular conformations from the Boltzmann distribution $\pboltz(\bx) \propto \exp(-\mathcal{E}(\bx)/\kBT)$ is fundamental to computational chemistry and materials science.
Accurate Boltzmann sampling enables the prediction of thermodynamic observables, free energies, and structural properties that govern molecular behavior.
Traditional approaches based on Molecular Dynamics (MD) and Monte Carlo (MC) methods can in principle generate Boltzmann-distributed samples, but they suffer from long correlation times and difficulty escaping metastable states, making them computationally prohibitive for complex molecular systems.

\paragraph{Generative Models for Molecular Sampling.}
Deep generative models have emerged as a promising alternative for molecular conformation generation.
Flow-based models~\citep{noe2019boltzmann}, diffusion models~\citep{xu2022geodiff,jing2022torsional}, and score-based generative models~\citep{song2021scorebased} have achieved remarkable success in generating diverse molecular conformations.
However, models trained purely on data via maximum likelihood or denoising score matching~\citep{hyvarinen2005estimation} (DSM) converge to the \emph{data distribution} $\pdata$, which may differ from the true Boltzmann distribution due to finite sampling, enhanced sampling bias, or non-equilibrium data collection.
The Boltzmann distribution, fundamental to statistical mechanics, encodes equilibrium molecular configurations at finite temperature via an energy-based model~\citep{lecun2006tutorial}.

\paragraph{Debiasing with Force Labels.}
A key property of the Boltzmann distribution is that the score function equals the normalized force: $\nabla \log \pboltz(\bx) = \bF(\bx)/\kBT$, where $\bF(\bx) = -\nabla\mathcal{E}(\bx)$ is the molecular force.
This connection between forces and the score function provides a natural avenue for incorporating energy-based information into generative models.
Existing force-guided approaches~\citep{arts2023two} achieve Boltzmann debiasing within diffusion frameworks but require multi-step iterative inference, which can be $10$--$1000\times$ slower than single-step generation.

\paragraph{One-Step Generation via Drifting Models.}
Drifting Models~\citep{deng2026drifting} provide an elegant framework for one-step generation: a generator $f_\theta(\beps)$ maps noise directly to samples, guided by a kernel-based drifting field~\citep{genton2001classes} $\bV_{p,q}(\bx)$ toward a target distribution.
The anti-symmetry property ($\bV_{p,q} = -\bV_{q,p}$) ensures equilibrium when generator and data distributions match.
However, this equilibrium is at $q = \pdata$, not $q = \pboltz$---when training data is biased, generated samples inherit the same bias.
\emph{Prior to this work, Drifting Models have only been applied to image generation; their utility for molecular systems remains unexplored.}

\paragraph{Our Contribution.}
We introduce Drifting Models to molecular conformation generation and bridge drifting with force-guided Boltzmann debiasing through rigorous theory and unique molecular insights:
Our key contributions are:
\begin{itemize}
    \item \textbf{Drifting Score Identity and Drifting Force Identity (Theorem~\ref{thm:main}, Corollary~\ref{cor:decomposition}).} We prove that for Gaussian kernels, the drifting field's attraction term equals a kernel-weighted average of the \emph{score function} of the sampling distribution---a fundamental relationship that holds for any sufficiently regular distribution. Substituting the score--force identity $\nabla\log\pboltz = \bF/\kBT$ immediately gives a force-based expression for the Boltzmann-weighted attraction, enabling direct use of force labels in the drifting framework.
    \item \textbf{Representation-Dependent Force Effectiveness.} We discover a striking phenomenon unique to molecular systems: force incorporation's effectiveness \emph{reverses} across representations. \emph{Force-Interpolated Drifting (FI)} dominates in coordinate space (TVD $= 0.139$) because forces are physically intuitive displacement directions. Conversely, \emph{Force-Aligned Kernel (FK)} achieves state-of-the-art accuracy in distance feature space (TVD $= 0.089$, 54\% better than baseline) by avoiding manifold violations: force interpolation in distance space creates abstract high-dimensional vectors outside the convex hull of valid molecular geometries, while kernel reweighting preserves atomic bond constraints by construction. This complementarity has no analog in image generation and reveals a fundamental principle of force-guided molecular learning.
    \item \textbf{Per-Molecule Structural Validity Crisis.} We expose a critical failure mode of existing methods (including diffusion): achieving good aggregate $h(r)$ TVD while catastrophically violating per-molecule constraints (Bond Stability $< 1\%$). FI and FK simultaneously optimize both metrics (FI: TVD $= 0.139$, Bond Stability $= 97.5\%$; FK: TVD $= 0.089$, Bond Stability $= 100\%$), demonstrating that structural validity is as important as distributional accuracy.
    \item \textbf{Equilibrium Analysis.} Theoretical analysis shows that force substitution prevents equilibrium at $q = \pdata$ (correction term survives), while exact equilibrium at $q = \pboltz$ requires unbiased training data. The practical equilibrium lies between the two distributions, shifted toward $\pboltz$ by force guidance strength ($\omega$ or $\gamma$).
\end{itemize}

% ============================================================================
%  2. PRELIMINARIES
% ============================================================================
\section{Preliminaries}
\label{sec:prelim}

% ---------- 2.1 Boltzmann ----------
\subsection{Boltzmann Distribution and Force Labels}
\label{sec:boltzmann}

Consider a molecular system with atomic coordinates $\bx \in \R^{n \times 3}$, where $n$ is the number of atoms.
The system's potential energy is given by $\mathcal{E}(\bx)$, and at temperature $T$ the equilibrium distribution is the Boltzmann distribution:
\begin{equation}
    \pboltz(\bx) = \frac{1}{Z}\exp\!\left(-\frac{\mathcal{E}(\bx)}{\kBT}\right),
    \label{eq:boltzmann}
\end{equation}
where $Z = \int \exp(-\mathcal{E}(\bx)/\kBT)\,d\bx$ is the partition function and $k_B$ is Boltzmann's constant.

The molecular force is the negative gradient of the potential energy:
\begin{equation}
    \bF(\bx) = -\nabla\mathcal{E}(\bx).
    \label{eq:force}
\end{equation}
A crucial property connecting forces to the Boltzmann distribution is that the score function equals the rescaled force:
\begin{equation}
    \nabla \log \pboltz(\bx) = \frac{\bF(\bx)}{\kBT}.
    \label{eq:score-force}
\end{equation}
This identity is the foundation for force-guided generative modeling: force labels, readily available from molecular simulations or neural network potentials, carry direct information about the target Boltzmann distribution.

% ---------- 2.2 Drifting Models ----------
\subsection{Drifting Models}
\label{sec:drifting}

Drifting Models~\citep{deng2026drifting} train a one-step generator $f_\theta\colon \R^d \to \R^d$ that maps noise $\beps \sim p_{\beps}$, e.g., standard Gaussian, to samples $\bx = f_\theta(\beps)$, with the pushforward distribution denoted $q_\theta$.
Training is guided by a \emph{drifting field} $\bV_{p,q}(\bx)$ that points from the current generator distribution $q$ toward the target distribution $p$.

Given positive samples $\by^+ \sim p$ (training data) and negative samples $\by^- \sim q$ (generated data), the drifting field is defined as:
\begin{equation}
    \bV_{p,q}(\bx) = \frac{1}{Z_p Z_q}\E_{p,q}\!\left[k(\bx,\by^+)k(\bx,\by^-)(\by^+ - \by^-)\right],
    \label{eq:drifting-field}
\end{equation}
where $k(\bx,\by)$ is a kernel function and $Z_p, Z_q$ are normalization constants.
Since $\by^+$ and $\by^-$ are sampled independently, the expectation factors, and the field decomposes into an attraction term toward positive samples and a repulsion term away from negative samples:
\begin{equation}
    \bV_{p,q}(\bx) = \bV_p^+(\bx) - \bV_q^-(\bx),
    \label{eq:decompose-V}
\end{equation}
where the attraction and repulsion terms are:
\begin{align}
    \bV_p^+(\bx) &= \sum_j \kbar(\bx,\by_j^+)(\by_j^+ - \bx), \label{eq:attraction} \\
    \bV_q^-(\bx) &= \sum_j \kbar(\bx,\by_j^-)(\by_j^- - \bx), \label{eq:repulsion}
\end{align}
with normalized kernel weights $\kbar(\bx,\by_j) = k(\bx,\by_j)/\sum_i k(\bx,\by_i)$.

The original Drifting Model uses a Laplacian kernel:
\begin{equation}
    k_{\text{Lap}}(\bx,\by) = \exp\!\left(-\frac{\|\bx - \by\|}{\tau}\right),
    \label{eq:laplacian-kernel}
\end{equation}
where $\tau > 0$ is the bandwidth parameter.
The generator is trained with a self-play loss:
\begin{equation}
    \mathcal{L}(\theta) = \E_{\beps}\!\left[\left\|f_\theta(\beps) - \sg\!\left(f_\theta(\beps) + \bV_{p,q}(f_\theta(\beps))\right)\right\|^2\right],
    \label{eq:drifting-loss}
\end{equation}
where $\sg(\cdot)$ denotes the stop-gradient operator.

\paragraph{Anti-Symmetry and Equilibrium.}
The drifting field~\eqref{eq:drifting-field} satisfies anti-symmetry: swapping $p \leftrightarrow q$ flips the sign of $(\by^+ - \by^-)$, yielding $\bV_{p,q} = -\bV_{q,p}$.
As a direct consequence, when $q = p$, the field vanishes: $\bV_{p,p}(\bx) = 0$ for all $\bx$.
This guarantees that the data distribution $p$ is an equilibrium of the drifting dynamics.
However, this equilibrium is at $q = \pdata$, not necessarily at $q = \pboltz$.

% ============================================================================
%  3. METHOD
% ============================================================================
\section{Method}
\label{sec:method}

We now present our main contribution: a principled connection between the drifting field and molecular forces that enables Boltzmann-constrained one-step generation.

% ---------- 3.1 Drifting Score Identity ----------
\subsection{Drifting Score Identity}
\label{sec:identity}

Our key insight is that for Gaussian kernels, the drifting field's attraction term equals a kernel-weighted average of the \emph{score function} of the sampling distribution---for \emph{any} sufficiently regular distribution, not just the Boltzmann distribution. This fundamental relationship is established through an integration-by-parts argument.
When combined with the score--force identity specific to $\pboltz$, this reveals exactly how force labels can shift the equilibrium from $\pdata$ toward $\pboltz$.

\begin{theorem}[Drifting Score Identity]
\label{thm:main}
Let $k(\bx,\by) = \exp(-\|\bx-\by\|^2/(2\tau^2))$ be a Gaussian kernel.
For any sufficiently regular distribution $p$ (i.e., $p(\by)k(\bx,\by) \to 0$ as $\|\by\|\to\infty$), the attraction term of the drifting field satisfies, in the population limit:
\begin{equation}
    \bV_p^+(\bx) = \tau^2 \cdot \E_p\!\left[\kbar(\bx,\by)\,\nabla_\by \log p(\by)\right],
    \label{eq:main-identity}
\end{equation}
where $\kbar(\bx,\by) = k(\bx,\by)/\E_p[k(\bx,\by')]$ is the normalized kernel weight.
That is, the attraction term equals $\tau^2$ times a kernel-weighted average of the score function $\nabla \log p$ of the sampling distribution.
\end{theorem}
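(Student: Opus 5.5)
The plan is to write the population attraction term as an integral and use the fact that, for the Gaussian kernel, the displacement $\by-\bx$ is itself a gradient of the kernel. In the population limit the attraction term~\eqref{eq:attraction} becomes
\begin{equation*}
\bV_p^+(\bx) = \frac{1}{\E_p[k(\bx,\by')]}\int p(\by)\,k(\bx,\by)\,(\by-\bx)\,d\by .
\end{equation*}
Here I use the normalization $\kbar(\bx,\by)=k(\bx,\by)/\E_p[k(\bx,\by')]$ stated in Theorem~\ref{thm:main}. The denominator does not depend on $\by$, so it can be pulled out and carried along unchanged. The key observation is the identity
\begin{equation*}
\nabla_\by k(\bx,\by) = -\frac{\by-\bx}{\tau^2}\,k(\bx,\by),
\end{equation*}
which gives $k(\bx,\by)(\by-\bx) = -\tau^2\nabla_\by k(\bx,\by)$. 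This is exactly where the Gaussian kernel is needed. For the Laplacian kernel~\eqref{eq:laplacian-kernel} the gradient is $-\tfrac{1}{\tau}\tfrac{\by-\bx}{\|\by-\bx\|}k$, which gives unit directions rather than displacements, and the identity would fail.

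Substituting this identity, the numerator becomes $-\tau^2\int p(\by)\nabla_\by k(\bx,\by)\,d\by$. Integration by parts, applied componentwise via the divergence theorem on a ball of radius $R$ with $R\to\infty$, moves the gradient onto $p$:
\begin{equation*}
-\tau^2\int p\,\nabla_\by k\,d\by = -\tau^2\lim_{R\to\infty}\oint_{\|\by\|=R} p\,k\,\mathbf{n}\,dS + \tau^2\int k(\bx,\by)\,\nabla_\by p(\by)\,d\by .
\end{equation*}
Writing $\nabla p = p\,\nabla\log p$ on the support of $p$ turns the last integral into $\tau^2\,\E_p[k(\bx,\by)\nabla_\by\log p(\by)]$. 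Dividing by $\E_p[k(\bx,\by')]$ then yields~\eqref{eq:main-identity}.

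The main obstacle is justifying that the boundary term vanishes. The stated hypothesis is $p(\by)k(\bx,\by)\to0$, but the surface integral grows with the sphere area $R^{d-1}$, so pointwise decay alone is not enough. I would first note that the Gaussian factor $k$ decays like $e^{-R^2/(2\tau^2)}$, which beats any polynomial. So if $p$ is bounded, or merely grows subexponentially, the surface term tends to zero. I would state this as the intended reading of ``sufficiently regular''.

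Two further integrability conditions are needed to make the steps legitimate. First, $p\in C^1$, or at least weakly differentiable. Second, $k\,\nabla p \in L^1$, so that the score-weighted expectation is finite. If $p$ has compact support or vanishes on part of the space, I would additionally require $p$ to vanish continuously at the boundary of its support, so that no interior boundary terms appear. Once these conditions are recorded, the remaining steps are routine.
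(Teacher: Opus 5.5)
Your proposal is correct and follows the paper's argument: the Gaussian identity $k(\bx,\by)(\by-\bx)=-\tau^2\nabla_\by k(\bx,\by)$, integration by parts to move the gradient onto $p$, the rewriting $\nabla p = p\,\nabla\log p$, and division by $\E_p[k(\bx,\by')]$. Your handling of the boundary term and the integrability hypotheses is more careful than the paper's one-line ``boundary terms vanish'' step. The bounded or subexponential-growth assumption is not actually needed: since $pk\in L^1$, a smooth cutoff $\chi(\by/R)$ with $|\nabla_\by \chi(\by/R)|\le C/R$ makes the boundary contribution at most $(C/R)\int pk \to 0$, once $p$ is weakly differentiable (no jumps) and $k\nabla p\in L^1$, which are the conditions you already recorded.
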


\begin{proof}
The attraction term in the population limit is
$\bV_p^+(\bx) = {\E_{p(\by)}[k(\bx,\by)(\by - \bx)]}/{\E_{p(\by)}[k(\bx,\by)]}$.
We transform the numerator.

\textit{Step 1: Kernel gradient identity.}
For the Gaussian kernel, $\nabla_\by k(\bx,\by) = -k(\bx,\by)(\by - \bx)/\tau^2$, so:
\begin{equation}
    k(\bx,\by)(\by - \bx) = -\tau^2 \nabla_\by k(\bx,\by).
    \label{eq:kernel-grad}
\end{equation}

\textit{Step 2: Integration by parts.}
Substituting~\eqref{eq:kernel-grad} into the numerator and applying integration by parts (boundary terms vanish by the regularity assumption):
\begin{align}
    \E_p[k(\bx,\by)(\by - \bx)]
    &= -\tau^2\int p(\by)\,\nabla_\by k(\bx,\by)\,d\by \nonumber \\
    &= \tau^2\int \nabla_\by p(\by) \cdot k(\bx,\by)\,d\by
    = \tau^2\,\E_p\!\left[\nabla_\by \log p(\by) \cdot k(\bx,\by)\right].
    \label{eq:ibp}
\end{align}

Dividing by $\E_p[k(\bx,\by)]$:
\begin{equation}
    \bV_p^+(\bx) = \tau^2 \cdot \frac{\E_p[\nabla_\by \log p(\by) \cdot k(\bx,\by)]}{\E_p[k(\bx,\by)]} = \tau^2 \cdot \E_p\!\left[\kbar(\bx,\by)\,\nabla_\by \log p(\by)\right]. \qedhere
\end{equation}
\end{proof}

\begin{remark}
\label{rem:gaussian-needed}
The identity~\eqref{eq:main-identity} holds cleanly for Gaussian kernels because $\nabla_\by k = -(\by-\bx)k/\tau^2$ produces the exact $(\by-\bx)$ factor appearing in the attraction term.
For the Laplacian kernel, $\nabla_\by k = -(\by-\bx)k/(\tau\|\by-\bx\|)$ introduces an extra $1/\|\by-\bx\|$ normalization, and the relationship to scores is no longer clean.
This motivates our use of Gaussian kernels for force-guided drifting.
\end{remark}

The general identity~\eqref{eq:main-identity} involves the score $\nabla \log p$, which is typically unknown.
However, for the Boltzmann distribution, the score--force identity~\eqref{eq:score-force} provides direct access: $\nabla_\by \log \pboltz(\by) = \bF(\by)/\kBT$.
Substituting this into Theorem~\ref{thm:main} immediately gives:

\begin{corollary}[Drifting Force Identity]
\label{cor:decomposition}
When $p = \pboltz$, the attraction term of the drifting field satisfies:
\begin{equation}
    \bV_{\pboltz}^+(\bx) = \tau^2 \cdot \E_{\pboltz}\!\left[\kbar(\bx,\by)\frac{\bF(\by)}{\kBT}\right].
    \label{eq:force-attraction}
\end{equation}
That is, the Boltzmann-weighted attraction can be computed entirely from force labels, without knowledge of the score function.
\end{corollary}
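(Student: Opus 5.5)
The corollary is a direct specialization of Theorem~\ref{thm:main}, so I would verify that theorem's hypotheses for $p = \pboltz$ and then substitute the score--force identity~\eqref{eq:score-force}.

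\textbf{Step 1: check the hypotheses of Theorem~\ref{thm:main}.} The theorem needs a Gaussian kernel, which I take as given, and a regular $p$. Regularity means two things.
\begin{itemize}
\item The boundary condition $\pboltz(\by)k(\bx,\by)\to 0$ as $\|\by\|\to\infty$. Since $0<k\le 1$, it suffices that $\pboltz(\by)\to 0$ at infinity. This follows if $\mathcal{E}(\bx)\to\infty$ as $\|\bx\|\to\infty$, which is the usual confining assumption that also makes $Z<\infty$.
\item Differentiability of $\pboltz$, so that the integration by parts in~\eqref{eq:ibp} is valid. This holds whenever $\mathcal{E}$ is $C^1$, which is implicit in defining $\bF=-\nabla\mathcal{E}$.
\end{itemize}

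\textbf{Step 2: apply the theorem and substitute.} Theorem~\ref{thm:main} with $p=\pboltz$ gives
\[
\bV_{\pboltz}^+(\bx)=\tau^2\,\E_{\pboltz}\!\left[\kbar(\bx,\by)\,\nabla_\by\log\pboltz(\by)\right].
\]
Differentiating $\log\pboltz(\by) = -\mathcal{E}(\by)/\kBT - \log Z$ gives $\nabla_\by\log\pboltz(\by)=\bF(\by)/\kBT$, because the constant $\log Z$ disappears. Substituting this pointwise inside the expectation gives~\eqref{eq:force-attraction}.

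\textbf{Step 3: the interpretive claim.} The right-hand side contains only kernel weights, the $\pboltz$-expectation (estimated from samples), and force labels. Neither $Z$ nor $\nabla\log\pboltz$ appears explicitly, so the attraction can be computed from forces alone.

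\textbf{Main obstacle.} The delicate point is the justification hidden in Step~1: the integration by parts over all of $\R^{n\times 3}$. For it to hold one needs more than the pointwise boundary decay.
\begin{itemize}
\item \emph{Integrability.} We need $\pboltz\,\nabla_\by k$ and $k\,\nabla\pboltz = k\,\pboltz\,\bF/\kBT$ to be integrable, i.e.\ $\E_{\pboltz}[\|\bF\|\,k]<\infty$.
\item \emph{Vanishing boundary flux.} The surface integrals over large spheres must go to zero.
\end{itemize}
I would handle both by assuming $\mathcal{E}$ is confining and has polynomially bounded gradient. The Gaussian factor in $k$ then dominates any polynomial growth of $\bF$.

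A further subtlety is that molecular energies are translation- and rotation-invariant, so $\pboltz$ is not normalizable on $\R^{n\times 3}$ unless these modes are removed. I would note that the statement should be read either on the quotient (centered) coordinates or with a weak confining term added. The identity is then unchanged, because $k$ is itself translation-equivariant.
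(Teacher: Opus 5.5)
Your proposal is correct and is exactly the paper's argument: the paper obtains the corollary by substituting $\nabla_\by\log\pboltz(\by)=\bF(\by)/\kBT$ into Theorem~\ref{thm:main} and leaves the regularity hypotheses implicit. Your extra regularity check is sound but stronger than needed, because the Gaussian kernel itself decays, so the boundary terms already vanish once $\pboltz$ is bounded (energy bounded below, together with your growth bound on $\bF$); no confinement of $\mathcal{E}$ is required, which also removes your translation-invariance worry as far as the integration by parts is concerned.
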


In practice, training data comes from a distribution $\pdata$ that may differ from $\pboltz$.
Replacing the unknown $\nabla \log \pdata$ with the available force labels $\bF/\kBT$ introduces a systematic mismatch; we analyze this effect and its control through Force-Interpolated Drifting in Section~\ref{sec:force-interp}.

% ---------- 3.2 Force-Interpolated Drifting ----------
\subsection{Force-Interpolated Drifting (FI)}
\label{sec:force-interp}

Since the displacement form and the force form are two expressions for the same quantity (Theorem~\ref{thm:main}), a natural approach is to \emph{interpolate} between them at the per-term level.
Note that in high-dimensional molecular systems, the per-term magnitudes can differ substantially ($\|\tau^2 \bF/\kBT\| \sim 150\times \|\by - \bx\|$), so the interpolation weight also serves as a variance control mechanism.

\begin{definition}[Force-Interpolated Attraction]
\label{def:force-interp}
Given positive samples $\{\by_j^+\}$ with force labels $\{\bF(\by_j^+)\}$ and an interpolation weight $\omega \in [0,1]$, the force-interpolated attraction is:
\begin{equation}
    \bV_\omega^+(\bx) = \sum_j \kbar(\bx,\by_j^+)\Big[(1-\omega)(\by_j^+ - \bx) + \omega\,\tau^2\frac{\bF(\by_j^+)}{\kBT}\Big].
    \label{eq:force-interp}
\end{equation}
Equivalently:
\begin{equation}
    \bV_\omega^+(\bx) = \bV_{\pdata}^+(\bx) + \omega\sum_j \kbar(\bx,\by_j^+)\underbrace{\left[\tau^2\frac{\bF(\by_j^+)}{\kBT} - (\by_j^+ - \bx)\right]}_{\text{force--displacement residual}}.
    \label{eq:force-interp-residual}
\end{equation}
\end{definition}

\begin{proposition}[$\omega$ Controls Boltzmann Correction]
\label{prop:omega-controls}
In the population limit with training data from $\pdata$, the force-interpolated attraction decomposes as:
\begin{equation}
    \bV_\omega^+(\bx) = \bV_{\pdata}^+(\bx) + \omega \cdot \tau^2\,\E_{\pdata}\!\left[\kbar(\bx,\by)\,\nabla_\by\log\frac{\pboltz(\by)}{\pdata(\by)}\right].
    \label{eq:omega-decompose}
\end{equation}
That is, $\omega$ \emph{linearly scales} the Boltzmann correction---the discrepancy between force labels (encoding $\nabla\log\pboltz$) and the true data score ($\nabla\log\pdata$).
\end{proposition}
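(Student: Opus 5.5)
We start from the residual form \eqref{eq:force-interp-residual} of Definition~\ref{def:force-interp} and pass to the population limit, in which the normalized empirical weights $\kbar(\bx,\by_j^+)$ over samples $\by_j^+\sim\pdata$ become the weights $\kbar(\bx,\by)=k(\bx,\by)/\E_{\pdata}[k(\bx,\by')]$ inside an expectation over $\pdata$. The term $\sum_j\kbar(\bx,\by_j^+)(\by_j^+-\bx)$ then converges to $\bV_{\pdata}^+(\bx)$, and the correction becomes
\begin{equation*}
\omega\,\E_{\pdata}\!\left[\kbar(\bx,\by)\Big(\tau^2\tfrac{\bF(\by)}{\kBT}-(\by-\bx)\Big)\right].
\end{equation*}
By linearity of expectation it splits into a force part and a displacement part, and we treat each separately.

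For the force part, the score--force identity \eqref{eq:score-force} replaces $\bF(\by)/\kBT$ by $\nabla_\by\log\pboltz(\by)$ pointwise. This gives $\omega\tau^2\,\E_{\pdata}[\kbar\,\nabla_\by\log\pboltz]$. No integration by parts is needed here, since the expectation is still over $\pdata$ and the identity is used only pointwise.

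For the displacement part, we apply Theorem~\ref{thm:main} with $p=\pdata$, assuming $\pdata$ is sufficiently regular in the sense stated there. In the population limit this gives
\begin{equation*}
\E_{\pdata}[\kbar(\by-\bx)] = \bV_{\pdata}^+(\bx) = \tau^2\,\E_{\pdata}[\kbar\,\nabla_\by\log\pdata].
\end{equation*}
Subtracting the two parts and using linearity once more gives $\omega\tau^2\,\E_{\pdata}\big[\kbar\,(\nabla\log\pboltz-\nabla\log\pdata)\big]$. Since the normalizing constant $Z$ disappears under the gradient, the difference of scores equals $\nabla_\by\log(\pboltz/\pdata)$. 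This is exactly \eqref{eq:omega-decompose}, and it is linear in $\omega$.

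The main obstacle is justifying the hypotheses. The boundary condition of Theorem~\ref{thm:main} must hold for $\pdata$, and both $\pdata$ and $\pboltz$ need differentiable logs on the support of $\pdata$. Where $\pdata$ vanishes, $\log\pdata$ is undefined, so we either assume $\pdata>0$ with enough smoothness or read $\E_{\pdata}[\kbar\nabla\log\pdata]$ as $\int k\nabla\pdata/\E[k]$, which the integration-by-parts step of Theorem~\ref{thm:main} supports. We also need integrability of $k\,\bF$ under $\pdata$ so that the two parts can be split. Finally, passing from the empirical sum to the population expectation is a ratio of sample means, which is handled by the law of large numbers applied to numerator and denominator separately.
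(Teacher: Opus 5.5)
Your proposal is correct and follows the same route as the paper. You take the population-limit expectation of the residual form \eqref{eq:force-interp-residual} over $\pdata$, rewrite the force term pointwise via the score--force identity, rewrite the displacement term via Theorem~\ref{thm:main} with $p=\pdata$, and subtract. Your discussion of the hypotheses (boundary decay and positivity of $\pdata$, integrability of $k\bF$, and the ratio-of-means limit) is more careful than the paper's one-line proof but does not change the argument.
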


\begin{proof}
Taking the expectation of~\eqref{eq:force-interp-residual} over $\pdata$ and applying Theorem~\ref{thm:main} to $\pdata$:
the residual expectation is $\tau^2\,\E_{\pdata}[\kbar\,\nabla\log\pboltz] - \tau^2\,\E_{\pdata}[\kbar\,\nabla\log\pdata] = \tau^2\,\E_{\pdata}[\kbar\,\nabla\log(\pboltz/\pdata)]$.
\end{proof}

Setting $\omega = 1$ (full force substitution) and including the repulsion term, the complete force-substituted drifting field decomposes as:
\begin{equation}
    \hat{\bV}(\bx) = \underbrace{\bV_{\pdata}^+(\bx) - \bV_q^-(\bx)}_{\text{standard anti-symmetric drift}} + \underbrace{\tau^2\,\E_{\pdata}\!\left[\kbar(\bx,\by)\,\nabla_\by\log\frac{\pboltz(\by)}{\pdata(\by)}\right]}_{\text{Boltzmann correction}}.
    \label{eq:full-decompose}
\end{equation}
Table~\ref{tab:behavior} summarizes the behavior at different stages of convergence.

\begin{table}[t]
\centering
\caption{Behavioral analysis of force-substituted drifting at different stages of convergence.}
\label{tab:behavior}
\footnotesize
\setlength{\tabcolsep}{5pt}
\begin{tabular}{@{}lccc@{}}
\toprule
State of $q$ & Standard drift & Correction term & Net behavior \\
\midrule
$q$ far from data & Strong, toward $\pdata$ & Weak (relatively) & Mainly data-guided \\
$q \approx \pdata$ & $\approx 0$ (anti-symmetric eq.) & $\neq 0$, toward $\pboltz$ & Continues past $\pdata$ \\
$q \approx \pboltz$ & $\neq 0$, back toward $\pdata$ & Toward $\pboltz$ & Two terms balance \\
\bottomrule
\end{tabular}
\end{table}

\paragraph{Interpretation.}
The hyperparameter $\omega$ interpolates between:
\begin{itemize}
    \item $\omega = 0$: standard drifting, equilibrium at $\pdata$.
    \item $\omega = 1$: full force substitution, equilibrium shifted toward $\pboltz$.
    \item $0 < \omega \ll 1$: partial Boltzmann correction with controlled variance.
\end{itemize}
The interpolation weight $\omega$ acts as a control mechanism, allowing the model to balance between the displacement-based component (which matches the data distribution) and the force-based component (which encodes Boltzmann constraints).

\paragraph{Variance.}
The per-term magnitude of the interpolated vector is:
\begin{equation}
    \|(1-\omega)(\by_j - \bx) + \omega\,\tau^2\bF(\by_j)/\kBT\| \leq (1-\omega)\|\by_j - \bx\| + \omega\|\tau^2\bF(\by_j)/\kBT\|,
\end{equation}
which for $\omega = 0.01$ and a $150\times$ magnitude ratio gives $\approx 2.5\times$ the displacement scale---well within the range where the standard kernel ESS suffices.
Unlike full force substitution ($\omega = 1$), which amplifies per-term magnitudes by ${\sim}150\times$, Force-Interpolated Drifting with small $\omega$ adds only $\mathcal{O}(\omega)$ additional variance while providing the full Boltzmann correction direction.

% ---------- 3.3 Force-Aligned Kernel ----------
\subsection{Force-Aligned Kernel (FK)}
\label{sec:force-kernel}

An alternative approach to incorporating force information is to modify the kernel \emph{weights} rather than the displacement vectors.
The key idea is \emph{importance reweighting}: if the training data comes from $\pdata \neq \pboltz$, we can correct for the distributional mismatch by upweighting data points that are more likely under $\pboltz$.

\paragraph{Derivation.}
The standard drifting attraction weights each positive sample $\by_j$ by the normalized Gaussian kernel $\kbar(\bx, \by_j) \propto k(\bx, \by_j)$.
To bias the attraction toward Boltzmann-relevant samples, we would like to use a \emph{corrected} kernel that incorporates the Boltzmann probability of each data point:
\begin{equation}
    \kbar_{\text{corrected}}(\bx, \by_j) = \frac{k(\bx,\by_j) \cdot \pboltz(\by_j)^\gamma}{\sum_i k(\bx,\by_i) \cdot \pboltz(\by_i)^\gamma},
    \label{eq:ideal-reweight}
\end{equation}
where $\gamma > 0$ controls the reweighting strength.
Since $\pboltz(\by_j) \propto e^{-\mathcal{E}(\by_j)/\kBT}$, the softmax logit for the corrected kernel is:
\begin{equation}
    \ell_j = \log k(\bx,\by_j) + \gamma\log\pboltz(\by_j) = -\frac{\|\bx - \by_j\|^2}{2\tau^2} - \frac{\gamma\,\mathcal{E}(\by_j)}{\kBT} + \text{const},
    \label{eq:energy-reweight-logit}
\end{equation}
where the constant (including $\log Z$) cancels in the softmax normalization.
When energy labels $\mathcal{E}(\by_j)$ are available in the training data, this gives an \textbf{exact} energy-reweighted kernel with no approximation.

In settings where only force labels are available (or to obtain a position-dependent form), we can express the energy term using a first-order Taylor expansion of $\mathcal{E}(\bx)$ around $\by_j$:
\begin{equation}
    \mathcal{E}(\by_j) \approx \mathcal{E}(\bx) - \bF(\by_j)\cdot(\by_j - \bx).
    \label{eq:taylor-energy}
\end{equation}
Substituting into Eq.~\eqref{eq:energy-reweight-logit} and noting that $\mathcal{E}(\bx)$ is constant across all $j$ (hence cancels in softmax):
\begin{equation}
    \ell_j \approx \underbrace{-\frac{\|\bx - \by_j\|^2}{2\tau^2}}_{\text{standard distance}} + \underbrace{\gamma \cdot \frac{\bF(\by_j) \cdot (\by_j - \bx)}{\kBT}}_{\text{force-alignment term}}.
    \label{eq:force-kernel-logits}
\end{equation}
This force-based form has an intuitive geometric meaning: data points where the force $\bF(\by_j)$ points toward the query $\bx$ receive higher weight, preferentially attracting the generator toward regions that the Boltzmann distribution favors.

\begin{definition}[Force-Aligned Kernel]
\label{def:force-kernel}
Given positive samples $\{\by_j^+\}$ with force labels $\{\bF(\by_j^+)\}$ (and optionally energy labels $\{\mathcal{E}(\by_j^+)\}$), the Force-Aligned Kernel uses the logits from Eq.~\eqref{eq:force-kernel-logits} (or Eq.~\eqref{eq:energy-reweight-logit} when energies are available) with $\gamma > 0$ controlling the reweighting strength.
The force-aligned attraction is:
\begin{equation}
    \bV_{\text{FA}}^+(\bx) = \sum_j \kbar_{\text{FA}}(\bx, \by_j^+)(\by_j^+ - \bx), \quad \kbar_{\text{FA}}(\bx, \by_j^+) = \frac{\exp(\ell_j)}{\sum_i \exp(\ell_i)}.
    \label{eq:force-kernel-V}
\end{equation}
\end{definition}

\begin{proposition}[Boltzmann Reweighting]
\label{prop:boltzmann-reweight}
The energy-reweighted kernel (Eq.~\ref{eq:energy-reweight-logit}) satisfies exactly, and the Force-Aligned Kernel (Eq.~\ref{eq:force-kernel-logits}) satisfies approximately:
\begin{equation}
    \kbar_{\text{FA}}(\bx, \by_j) \approx \frac{k(\bx,\by_j) \cdot \pboltz(\by_j)^\gamma}{\sum_i k(\bx,\by_i) \cdot \pboltz(\by_i)^\gamma}.
    \label{eq:boltzmann-reweight}
\end{equation}
For the force-based form, the approximation error is controlled by $\epsilon = \tau\|\nabla^2\mathcal{E}\|/\|\nabla\mathcal{E}\|$: the Taylor remainder $R_j = \tfrac{1}{2}(\bx-\by_j)^\top\nabla^2\mathcal{E}(\xi_j)(\bx-\by_j) = \mathcal{O}(\tau^2\|\nabla^2\mathcal{E}\|)$ is negligible relative to the linear term $\mathcal{O}(\tau\|\nabla\mathcal{E}\|)$ when $\epsilon \ll 1$.
\end{proposition}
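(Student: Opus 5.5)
The argument splits into the exact energy-based claim and the approximate force-based claim. For the exact part, I would start from the logits in Eq.~\eqref{eq:energy-reweight-logit}. Substituting $\log\pboltz(\by_j) = -\mathcal{E}(\by_j)/\kBT - \log Z$ gives
\[
\ell_j = \log k(\bx,\by_j) + \gamma\log\pboltz(\by_j) + \gamma\log Z .
\]
Exponentiating yields $e^{\ell_j} = Z^{\gamma}\, k(\bx,\by_j)\,\pboltz(\by_j)^\gamma$. The factor $Z^\gamma$ does not depend on $j$, so it cancels between numerator and denominator of the softmax in Eq.~\eqref{eq:force-kernel-V}. What remains is exactly the right-hand side of Eq.~\eqref{eq:boltzmann-reweight}, so the energy form holds with equality.

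For the force-based form, I would expand the energy exactly with Lagrange's remainder:
\[
\mathcal{E}(\by_j) = \mathcal{E}(\bx) - \bF(\by_j)\cdot(\by_j-\bx) + R_j .
\]
Here $R_j = \tfrac12(\bx-\by_j)^\top\nabla^2\mathcal{E}(\xi_j)(\bx-\by_j)$ for some $\xi_j$ on the segment between $\bx$ and $\by_j$. The expansion is taken around $\by_j$, so the gradient is evaluated at $\by_j$, matching Eq.~\eqref{eq:taylor-energy}.

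Substituting this into the exact logit, I would write $\ell_j^{\text{exact}} = \ell_j^{\text{FA}} - \gamma R_j/\kBT + c(\bx)$, where $c(\bx)$ collects the $j$-independent term $-\gamma\mathcal{E}(\bx)/\kBT$ and the constants, and therefore drops out of the softmax. It follows that
\[
\kbar_{\text{FA}}(\bx,\by_j) = \frac{w_j\, e^{\gamma R_j/\kBT}}{\sum_i w_i\, e^{\gamma R_i/\kBT}},
\]
where $w_j$ are the ideal weights from Eq.~\eqref{eq:boltzmann-reweight}. Because softmax is 1-Lipschitz in the logits in the sup/total-variation sense, this gives
\[
\bigl|\log(\kbar_{\text{FA}}/\kbar_{\text{corrected}})\bigr| \le 2\gamma\max_i |R_i|/\kBT .
\]

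The main obstacle is controlling $R_j$ relative to the linear term. Only points with $\|\bx-\by_j\| = \mathcal{O}(\tau)$ carry non-negligible Gaussian weight: points farther than a few $\tau$ are exponentially suppressed by $k$, and their contribution to both softmaxes is small. I would therefore restrict attention to the effective neighborhood $\|\bx-\by_j\|\lesssim C\tau$. On that neighborhood, $|R_j| \le \tfrac12 C^2\tau^2\sup\|\nabla^2\mathcal{E}\|$, while the linear term is of size $\mathcal{O}(\tau\|\nabla\mathcal{E}\|)$. Their ratio is $\mathcal{O}(\tau\|\nabla^2\mathcal{E}\|/\|\nabla\mathcal{E}\|) = \mathcal{O}(\epsilon)$, so the log-ratio of the weights is a small fraction of the force-alignment term and vanishes as $\epsilon\to0$.

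I expect the tail truncation to need an assumption on how large the linear force term can become for distant points, since it could in principle offset the Gaussian decay. I would state this as a mild boundedness assumption on $\|\bF\|$ over the data support, consistent with the regularity already used in Theorem~\ref{thm:main}.
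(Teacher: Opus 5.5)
Your proposal is correct and follows the paper's own derivation. That derivation cancels the $j$-independent constants ($\log Z$ and $\mathcal{E}(\bx)$) in the softmax and compares the first-order Taylor remainder $R_j=\mathcal{O}(\tau^2\|\nabla^2\mathcal{E}\|)$ with the linear term $\mathcal{O}(\tau\|\nabla\mathcal{E}\|)$ on the $\mathcal{O}(\tau)$ kernel neighborhood; your explicit log-ratio bound and tail assumption only make rigorous what the paper leaves implicit. One small sign slip: expanding $\mathcal{E}(\bx)$ about $\by_j$ gives $\mathcal{E}(\by_j)=\mathcal{E}(\bx)-\bF(\by_j)\cdot(\by_j-\bx)-R_j$, so the tilt factor is $e^{-\gamma R_j/\kBT}$, which leaves your bound unchanged since only $|R_j|$ enters it.
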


\paragraph{Interpretation.}
The hyperparameter $\gamma$ controls the \emph{reweighting strength}, interpolating between the standard kernel ($\gamma = 0$, no Boltzmann correction) and full Boltzmann reweighting ($\gamma = 1$, where the kernel weight is proportional to $k(\bx,\by_j)\cdot\pboltz(\by_j)$).

\paragraph{Stability.}
Since the displacement $(\by_j - \bx)$ in Eq.~\eqref{eq:force-kernel-V} is unchanged from the standard method, the magnitude of $\bV_{\text{FA}}^+$ is bounded by the same $\mathcal{O}(1)$ scale as the standard attraction.
This completely avoids the variance disaster of direct force substitution while still incorporating force information via the softmax weights.

The training procedures for both force-interpolated and force-aligned drifting are summarized in Algorithms~\ref{alg:force-interp} and~\ref{alg:force-aligned}, respectively.

\begin{algorithm}[t]
\caption{Force-Interpolated Drifting Training (Coordinate Space)}
\label{alg:force-interp}
\begin{algorithmic}[1]
\REQUIRE Generator $f_\theta$, training data $\{(\by_j, \bF(\by_j))\}$, Gaussian kernel bandwidth $\tau$, interpolation weight $\omega$, learning rate $\eta$
\WHILE{not converged}
    \STATE Sample noise batch $\{\beps_b\}_{b=1}^B \sim p_{\beps}$
    \STATE Generate sample batch $\{\bx_b\}_{b=1}^B = \{f_\theta(\beps_b)\}$  \COMMENT{One-step generation}
    \STATE Sample positive batch $\{\by_j^+\}_{j=1}^M$ with force labels $\{\bF(\by_j^+)\}$
    \STATE Set negative batch $\{\by_j^-\}_{j=1}^B \leftarrow \{\bx_1, \ldots, \bx_B\}$ \COMMENT{Generated samples}
    \STATE Compute standard weights $\kbar(\bx_b, \by_j^+) = \exp(-\|\bx_b-\by_j^+\|^2/(2\tau^2)) / \sum_i \exp(-\|\bx_b-\by_i^+\|^2/(2\tau^2))$ for all $b,j$
    \STATE Compute interpolated displacement $\mathbf{d}_{b,j} = (1-\omega)(\by_j^+ - \bx_b) + \omega\,\tau^2\,\bF(\by_j^+)/\kBT$
    \STATE Compute attraction $\bV_\omega^+(\bx_b) = \sum_j \kbar(\bx_b,\by_j^+)\,\mathbf{d}_{b,j}$
    \STATE Compute repulsion $\bV_q^-(\bx_b) = \sum_j \kbar(\bx_b,\by_j^-)(\by_j^- - \bx_b)$ \COMMENT{Standard kernel}
    \STATE Compute drifting field $\bV(\bx_b) = \bV_\omega^+(\bx_b) - \bV_q^-(\bx_b)$
    \FOR{$b = 1$ to $B$}
        \STATE Update $\theta \leftarrow \theta - \eta\nabla_\theta \|f_\theta(\beps_b) - \sg(f_\theta(\beps_b) + \bV(\bx_b))\|^2$
    \ENDFOR
\ENDWHILE
\end{algorithmic}
\end{algorithm}

\begin{algorithm}[t]
\caption{Force-Aligned Kernel Drifting Training (Coordinate Space)}
\label{alg:force-aligned}
\begin{algorithmic}[1]
\REQUIRE Generator $f_\theta$, training data $\{(\by_j, \bF(\by_j))\}$, Gaussian kernel bandwidth $\tau$, alignment strength $\gamma$, learning rate $\eta$
\WHILE{not converged}
    \STATE Sample noise batch $\{\beps_b\}_{b=1}^B \sim p_{\beps}$
    \STATE Generate sample batch $\{\bx_b\}_{b=1}^B = \{f_\theta(\beps_b)\}$  \COMMENT{One-step generation}
    \STATE Sample positive batch $\{\by_j^+\}_{j=1}^M$ with force labels $\{\bF(\by_j^+)\}$
    \STATE Set negative batch $\{\by_j^-\}_{j=1}^B \leftarrow \{\bx_1, \ldots, \bx_B\}$ \COMMENT{Generated samples}
    \FOR{$b = 1$ to $B$}
        \STATE Compute force-aligned logits $\ell_{b,j} = {-\|\bx_b-\by_j^+\|^2}/{(2\tau^2)} + \gamma \cdot {\bF(\by_j^+) \cdot (\by_j^+ - \bx_b)}/{\kBT}$ for all $j$
        \STATE Compute force-aligned weights $\kbar_{\text{FA}}(\bx_b, \by_j^+) = \exp(\ell_{b,j}) / \sum_i \exp(\ell_{b,i})$
        \STATE Compute attraction $\bV_{\text{FA}}^+(\bx_b) = \sum_j \kbar_{\text{FA}}(\bx_b,\by_j^+)(\by_j^+ - \bx_b)$
        \STATE Compute repulsion $\bV_q^-(\bx_b) = \sum_j \kbar(\bx_b,\by_j^-)(\by_j^- - \bx_b)$ \COMMENT{Standard kernel}
        \STATE Compute drifting field $\bV(\bx_b) = \bV_{\text{FA}}^+(\bx_b) - \bV_q^-(\bx_b)$
        \STATE Update $\theta \leftarrow \theta - \eta\nabla_\theta \|f_\theta(\beps_b) - \sg(f_\theta(\beps_b) + \bV(\bx_b))\|^2$
    \ENDFOR
\ENDWHILE
\end{algorithmic}
\end{algorithm}

% ---------- 3.4 Feature-Space Extensions ----------

\subsection{Feature-Space Drifting}
\label{sec:feature-drift}

The coordinate-space drifting field (Sections~\ref{sec:drifting}--\ref{sec:force-kernel}) computes kernel weights and displacements directly in $\R^{n \times 3}$.
While effective for MLP generators, this formulation is sensitive to rigid-body motions and can fail for architectures that produce less coordinated outputs, such as equivariant Transformers~\citep{liao2024equiformerv2}.
An alternative is to use energy-guided debiasing with potential score matching~\citep{guo2025potential} to correct for biased training data.
A natural remedy is to operate in a \emph{feature space} that factors out rotational and translational symmetry.
In this section we define a pairwise-distance feature map and derive the \emph{exact} force-interpolation formula in this space---without the per-pair projection approximation commonly used in internal-coordinate methods.

\paragraph{Feature map.}
We define the feature map $\phi\colon \R^{n \times 3} \to \R^P$ as the vector of all pairwise interatomic distances:
\begin{equation}
    \phi_k(\bx) = \|x_{i_k} - x_{j_k}\|, \quad k = 1,\ldots,P = \tbinom{n}{2},
    \label{eq:feature-map}
\end{equation}
where $(i_k, j_k)$ enumerates all atom pairs.
This map is translation- and rotation-invariant by construction.
Its Jacobian $J(\bx) = \partial\phi/\partial\bx \in \R^{P \times D}$ (with $D = 3n$) has a sparse analytical form: for pair $k = (i_k, j_k)$,
\begin{equation}
    \frac{\partial \phi_k}{\partial x_a^\alpha} = \begin{cases}
        +\hat{r}_{i_k j_k}^\alpha & \text{if } a = i_k, \\[2pt]
        -\hat{r}_{i_k j_k}^\alpha & \text{if } a = j_k, \\[2pt]
        0 & \text{otherwise},
    \end{cases}
    \label{eq:jacobian}
\end{equation}
where $\hat{r}_{ij} = (x_i - x_j)/\|x_i - x_j\|$ is the unit bond direction.
Each row of $J$ has exactly 6 nonzero entries (3 per atom in the pair), giving $6P$ nonzeros in total.

\paragraph{Feature-space drifting.}
Replacing Cartesian distances with feature-space distances in the kernel, the feature-space attraction and repulsion are:
\begin{align}
    \bV_\phi^+(\bx) &= \sum_j \kbar_\phi\!\big(\bx, \by_j^+\big)\big(\phi(\by_j^+) - \phi(\bx)\big), \label{eq:feature-attract} \\
    \bV_\phi^-(\bx) &= \sum_j \kbar_\phi\!\big(\bx, \by_j^-\big)\big(\phi(\by_j^-) - \phi(\bx)\big), \label{eq:feature-repel}
\end{align}
where $\kbar_\phi(\bx, \by_j) \propto \exp\!\big(-\|\phi(\bx) - \phi(\by_j)\|^2 / (2\tau_\phi^2)\big)$ and $\tau_\phi$ is the feature-space bandwidth.
The generator is trained with the feature-space loss:
\begin{equation}
    \mathcal{L}_\phi(\theta) = \E_{\beps}\!\left[\left\|\phi(f_\theta(\beps)) - \sg\!\left(\phi(f_\theta(\beps)) + \bV_\phi^+ - \bV_\phi^-\right)\right\|^2\right],
    \label{eq:feature-loss}
\end{equation}
where gradients flow back to the generator through the differentiable map $\phi \circ f_\theta$.

\paragraph{Exact Feature-Space Force Interpolation.}
To extend Force-Interpolated Drifting (Definition~\ref{def:force-interp}) to feature space, we need the \emph{feature-space force} $\bG(\by) \in \R^P$---the analog of the Cartesian force $\bF(\by) \in \R^D$ in the space of pairwise distances.

When the potential energy depends only on pairwise distances---$\mathcal{E}(\bx) = \tilde{\mathcal{E}}(\phi(\bx))$---the chain rule gives $\bF = J^\top\bG$, where $\bG = -\nabla_\phi\tilde{\mathcal{E}}$.
Since $J \in \R^{P \times D}$ is non-square and rank-deficient (rank $D - 6$ due to 3 translational and 3 rotational zero modes), the exact feature-space force is obtained as the minimum-norm solution of $J^\top\bG = \bF$:
\begin{equation}
    \boxed{\bG(\by) = J(\by)\,\big(J(\by)^\top J(\by)\big)^+\bF(\by),}
    \label{eq:exact-G}
\end{equation}
where $(J^\top J)^+$ denotes the Moore--Penrose pseudoinverse of the $D \times D$ Gram matrix.

The feature-space Force-Interpolated Drifting is then:
\begin{equation}
    \bV_{\phi,\omega}^+(\bx) = \sum_j \kbar_\phi(\bx,\by_j^+)\bigg[(1-\omega)\big(\phi(\by_j^+) - \phi(\bx)\big) + \omega\,\tau_\phi^2\,\frac{\bG(\by_j^+)}{\kBT}\bigg].
    \label{eq:feature-force-interp}
\end{equation}
This is the direct analog of Definition~\ref{def:force-interp} in feature space, with $\bF \to \bG$.
However, the two terms can have vastly different scales: defining the \emph{scale ratio}
\begin{equation}
    \rho \;=\; \frac{\tau_\phi\,\overline{\|\bG\|}}{\kBT},
    \label{eq:scale-ratio}
\end{equation}
we observe $\rho \approx 300$ in our molecular experiments (Section~\ref{sec:experiments}), because the bandwidth $\tau_\phi$ (set by data density) is decoupled from the force magnitude $\|\bG\|/\kBT$.
When $\rho \gg 1$, even $\omega = 0.01$ leads to force-term dominance, degrading performance below standard feature-space drifting.

We therefore rescale the force term so that both components have $\mathcal{O}(\tau_\phi)$ magnitude:
\begin{equation}
    \boxed{\bV_{\phi,\omega}^+(\bx) = \sum_j \kbar_\phi(\bx,\by_j^+)\bigg[(1-\omega)\big(\phi(\by_j^+) - \phi(\bx)\big) + \omega\,\frac{\tau_\phi}{\overline{\|\bG\|}}\,\bG(\by_j^+)\bigg],}
    \label{eq:feature-force-interp-norm}
\end{equation}
where $\overline{\|\bG\|} = \frac{1}{N}\sum_{j=1}^N \|\bG(\by_j)\|$ is precomputed alongside $\bG$ at zero additional training cost.
With this normalization, $\omega \in [0,1]$ directly controls the interpolation ratio without requiring extreme values.

\paragraph{Computational cost.}
Since $\bG(\by_j)$ depends only on data points and not on generated samples, it can be \emph{precomputed once} before training, adding zero overhead to the training loop.
Table~\ref{tab:feature-cost} summarizes the per-sample cost for ethanol ($n = 9$, $D = 27$, $P = 36$).

\paragraph{Feature-space Force-Aligned Kernel.}
The Force-Aligned Kernel (Definition~\ref{def:force-kernel}) extends to feature space by replacing Cartesian forces with the exact feature-space force~$\bG$:
\begin{equation}
    \ell_j^{\phi} = -\frac{\|\phi(\bx) - \phi(\by_j^+)\|^2}{2\tau_\phi^2} + \gamma \cdot \frac{\bG(\by_j^+) \cdot \big(\phi(\by_j^+) - \phi(\bx)\big)}{\kBT}.
    \label{eq:feature-force-kernel-logits}
\end{equation}
The alignment term can have magnitude ${\sim}\rho$ (Eq.~\ref{eq:scale-ratio}) relative to the standard logits, so we standardize within each minibatch:
\begin{equation}
    \ell_j^{\phi} = -\frac{\|\phi(\bx) - \phi(\by_j^+)\|^2}{2\tau_\phi^2} + \gamma \cdot \frac{A_j}{\mathrm{std}(A)}, \quad A_j = \bG(\by_j^+) \cdot \big(\phi(\by_j^+) - \phi(\bx)\big),
    \label{eq:feature-force-kernel-norm}
\end{equation}
where $\mathrm{std}(A)$ is computed over the positive batch.
This makes $\gamma$ scale-invariant: $\gamma \sim \mathcal{O}(1)$ regardless of the absolute force and distance scales.
Force-Interpolated Drifting~\eqref{eq:feature-force-interp-norm} and the Force-Aligned Kernel can be combined by using $\kbar_\phi^{\text{FA}}$ in place of $\kbar_\phi$.

\begin{remark}[Metric correction]
\label{rem:metric}
The feature-space score of the Boltzmann distribution includes an additional \emph{metric correction}:
\begin{equation}
    \nabla_\phi \log q_{\text{Boltz}}(\phi) = \frac{\bG}{\kBT} + \frac{1}{2}\nabla_\phi \log\det(J^\top J),
    \label{eq:metric-correction}
\end{equation}
where the second term accounts for the non-flat volume element in internal coordinates (analogous to the Fixman potential in constrained molecular dynamics).
In the current work we omit this correction, which is justified when the metric varies slowly relative to the kernel bandwidth~$\tau_\phi$.
Incorporating the full metric correction is an interesting direction for future work.
\end{remark}

\paragraph{Implementation algorithms.}

\begin{algorithm}[t]
\caption{Precomputation of Feature-Space Forces}
\label{alg:precompute}
\begin{algorithmic}[1]
\REQUIRE Training data $\{(\by_j, \bF(\by_j))\}_{j=1}^N$, feature map $\phi$, Jacobian computation
\ENSURE Precomputed feature-space forces $\{\bG(\by_j)\}_{j=1}^N$ and average magnitude $\overline{\|\bG\|}$
\STATE Initialize empty storage for forces $\{\bG_j\}$
\FOR{each training sample $\by_j$}
    \STATE Compute Jacobian $J(\by_j) = \partial\phi/\partial\by|_{\by=\by_j}$ \COMMENT{Sparse, $P \times D$ matrix}
    \STATE Compute Gram matrix $\mathbf{G}_j = J(\by_j)^\top J(\by_j)$ \COMMENT{$D \times D$ matrix}
    \STATE Compute Moore-Penrose pseudoinverse $\mathbf{G}_j^+ = (\mathbf{G}_j)^+$ \COMMENT{Via SVD}
    \STATE Compute exact feature-space force $\bG(\by_j) = J(\by_j) \mathbf{G}_j^+ \bF(\by_j)$ \COMMENT{Eq.~\eqref{eq:exact-G}}
    \STATE Store $\bG_j$ in precomputed forces
\ENDFOR
\STATE Compute average force magnitude: $\overline{\|\bG\|} \leftarrow \frac{1}{N}\sum_{j=1}^N \|\bG_j\|$
\STATE Store $\overline{\|\bG\|}$ for use in feature-space training
\STATE \textbf{Cost:} $\sim 59$ MFLOPs total for $N=2{,}700$ samples (Table~\ref{tab:feature-cost})
\end{algorithmic}
\end{algorithm}

% ---------- 3.4 Feature-Space Drifting ----------

\begin{algorithm}[t]
\caption{Feature-Space Force-Interpolated Drifting Training}
\label{alg:feature-force-interp}
\begin{algorithmic}[1]
\REQUIRE Generator $f_\theta$, training data $\{(\by_j, \bG(\by_j))\}_{j=1}^N$ (precomputed via Algorithm~\ref{alg:precompute})
\REQUIRE Feature map $\phi$, feature-space bandwidth $\tau_\phi$, interpolation weight $\omega$, learning rate $\eta$
\REQUIRE Precomputed average force magnitude $\overline{\|\bG\|}$
\WHILE{not converged}
    \STATE Sample noise batch $\{\beps_b\}_{b=1}^B \sim p_{\beps}$
    \STATE Generate sample batch $\{\bx_b\}_{b=1}^B = \{f_\theta(\beps_b)\}$  \COMMENT{One-step generation}
    \STATE Sample positive batch $\{\by_j^+\}_{j=1}^M$ with precomputed forces $\{\bG(\by_j^+)\}$
    \STATE Set negative batch $\{\by_j^-\}_{j=1}^B \leftarrow \{\bx_1, \ldots, \bx_B\}$ \COMMENT{Generated samples}

    \FOR{$b = 1$ to $B$}
        \STATE Compute feature maps: $\bphi(\bx_b) = \phi(\bx_b)$, $\bphi(\by_j^+) = \phi(\by_j^+)$, $\bphi(\by_j^-) = \phi(\by_j^-)$ for all $j$

        \STATE Compute feature-space kernel weights for positive samples
        \STATE $\kbar_\phi(\bx_b, \by_j^+) = \frac{\exp(-\|\bphi(\bx_b) - \bphi(\by_j^+)\|^2/(2\tau_\phi^2))}{\sum_{i=1}^M \exp(-\|\bphi(\bx_b) - \bphi(\by_i^+)\|^2/(2\tau_\phi^2))}$ for all $j$

        \STATE Compute scale-normalized interpolated displacement
        \STATE $\mathbf{d}_{b,j}^\phi = (1-\omega)(\bphi(\by_j^+) - \bphi(\bx_b)) + \omega\,\frac{\tau_\phi}{\overline{\|\bG\|}}\,\bG(\by_j^+)$ \COMMENT{Eq.~\eqref{eq:feature-force-interp-norm}}

        \STATE Compute feature-space attraction: $\bV_{\phi,\omega}^+(\bx_b) = \sum_{j=1}^M \kbar_\phi(\bx_b,\by_j^+) \mathbf{d}_{b,j}^\phi$

        \STATE Compute feature-space repulsion with standard kernel
        \STATE $\bV_\phi^-(\bx_b) = \sum_{j=1}^B \kbar_\phi(\bx_b,\by_j^-) (\bphi(\by_j^-) - \bphi(\bx_b))$

        \STATE Compute feature-space drifting field: $\bV_\phi(\bx_b) = \bV_{\phi,\omega}^+(\bx_b) - \bV_\phi^-(\bx_b)$

        \STATE Update: $\theta \leftarrow \theta - \eta\nabla_\theta \|\phi(f_\theta(\beps_b)) - \sg(\phi(f_\theta(\beps_b)) + \bV_\phi(\bx_b))\|^2$ \COMMENT{Eq.~\eqref{eq:feature-loss}}
    \ENDFOR
\ENDWHILE
\end{algorithmic}
\end{algorithm}

\begin{algorithm}[t]
\caption{Feature-Space Force-Aligned Kernel Drifting Training}
\label{alg:feature-force-aligned}
\begin{algorithmic}[1]
\REQUIRE Generator $f_\theta$, training data $\{(\by_j, \bG(\by_j))\}_{j=1}^N$ (precomputed via Algorithm~\ref{alg:precompute})
\REQUIRE Feature map $\phi$, feature-space bandwidth $\tau_\phi$, alignment strength $\gamma$, learning rate $\eta$
\WHILE{not converged}
    \STATE Sample noise batch $\{\beps_b\}_{b=1}^B \sim p_{\beps}$
    \STATE Generate sample batch $\{\bx_b\}_{b=1}^B = \{f_\theta(\beps_b)\}$  \COMMENT{One-step generation}
    \STATE Sample positive batch $\{\by_j^+\}_{j=1}^M$ with precomputed forces $\{\bG(\by_j^+)\}$
    \STATE Set negative batch $\{\by_j^-\}_{j=1}^B \leftarrow \{\bx_1, \ldots, \bx_B\}$ \COMMENT{Generated samples}

    \FOR{$b = 1$ to $B$}
        \STATE Compute feature maps: $\bphi(\bx_b) = \phi(\bx_b)$, $\bphi(\by_j^+) = \phi(\by_j^+)$, $\bphi(\by_j^-) = \phi(\by_j^-)$ for all $j$

        \STATE Compute force-alignment scores for positive batch
        \STATE $A_{b,j} = \bG(\by_j^+) \cdot (\bphi(\by_j^+) - \bphi(\bx_b))$ for all $j=1,\ldots,M$
        \STATE Compute per-batch standardization: $\sigma_b = \text{std}(\{A_{b,1},\ldots,A_{b,M}\})$ \COMMENT{Minibatch norm}

        \STATE Compute force-aligned logits
        \STATE $\ell_{b,j}^\phi = -\frac{\|\bphi(\bx_b) - \bphi(\by_j^+)\|^2}{2\tau_\phi^2} + \gamma \cdot \frac{A_{b,j}}{\sigma_b}$ for all $j$ \COMMENT{Eq.~\eqref{eq:feature-force-kernel-norm}}

        \STATE Compute force-aligned weights: $\kbar_{\text{FA}}^\phi(\bx_b, \by_j^+) = \exp(\ell_{b,j}^\phi) / \sum_{i=1}^M \exp(\ell_{b,i}^\phi)$

        \STATE Compute feature-space force-aligned attraction
        \STATE $\bV_{\text{FA}}^\phi(\bx_b) = \sum_{j=1}^M \kbar_{\text{FA}}^\phi(\bx_b,\by_j^+) (\bphi(\by_j^+) - \bphi(\bx_b))$ \COMMENT{Displacements unchanged}

        \STATE Compute feature-space repulsion with standard kernel
        \STATE $\bV_\phi^-(\bx_b) = \sum_{j=1}^B \kbar_\phi(\bx_b,\by_j^-) (\bphi(\by_j^-) - \bphi(\bx_b))$

        \STATE Compute feature-space drifting field: $\bV_\phi(\bx_b) = \bV_{\text{FA}}^\phi(\bx_b) - \bV_\phi^-(\bx_b)$

        \STATE Update: $\theta \leftarrow \theta - \eta\nabla_\theta \|\phi(f_\theta(\beps_b)) - \sg(\phi(f_\theta(\beps_b)) + \bV_\phi(\bx_b))\|^2$ \COMMENT{Eq.~\eqref{eq:feature-loss}}
    \ENDFOR
\ENDWHILE
\end{algorithmic}
\end{algorithm}

\begin{table}[h]
\centering
\caption{Per-sample cost of the exact feature-space force~\eqref{eq:exact-G} for ethanol ($n{=}9$, $D{=}27$, $P{=}36$).
The Jacobian is sparse ($6P = 216$ nonzeros); the bottleneck is the $D {\times} D$ pseudoinverse.
For $N = 2{,}700$ training samples, the total precomputation (${\sim}59$\,MFLOPs) is less than one training step and amounts to $0.007\%$ of the total training budget.}
\label{tab:feature-cost}
\footnotesize
\setlength{\tabcolsep}{5pt}
\begin{tabular}{@{}lrl@{}}
\toprule
Step & FLOPs & Description \\
\midrule
Build $J$ (Eq.~\ref{eq:jacobian}) & 108 & $P$ unit vectors (reused from $\phi$) \\
Gram matrix $J^\top J$ & 1,620 & $P$ rank-1 updates, $6{\times}6$ blocks \\
Pseudoinverse solve $J^\top J\, \mathbf{v} = \bF$ & ${\sim}20{,}000$ & $D {\times} D$ SVD (bottleneck) \\
$\bG = J\mathbf{v}$ & 216 & Sparse matrix--vector product \\
\midrule
\textbf{Total per sample} & $\mathbf{\sim\!22}$\textbf{K} & ${\sim}100{\times}$ costlier than naive $J\bF$ \\
\textbf{Total precompute} ($N{=}2{,}700$) & $\mathbf{\sim\!59}$\textbf{M} & $< 1$ training step; $0.007\%$ of training \\
\bottomrule
\end{tabular}
\end{table}

For larger molecules, the cost scales as $\mathcal{O}(D^3)$ per sample (dominated by the pseudoinverse of the $D \times D$ Gram matrix), but remains a one-time precomputation: for $n \leq 100$ atoms ($D = 300$), the entire dataset is processed in seconds on a modern GPU.

% ============================================================================
%  4. ANALYSIS
% ============================================================================
\section{Analysis}
\label{sec:theory}

% ---------- 4.1 Equilibrium ----------
\subsection{Equilibrium Analysis}
\label{sec:equilibrium}

The decomposition~\eqref{eq:full-decompose} enables a transparent equilibrium analysis of the force-substituted drifting field.

\begin{proposition}[Equilibrium Shift]
\label{prop:equilibrium}
Consider the force-substituted drifting field~\eqref{eq:full-decompose} with positive samples from $\pdata$.
\begin{enumerate}
    \item[(a)] $q = \pdata$ is \emph{not} an equilibrium when $\pdata \neq \pboltz$: the standard drift vanishes by anti-symmetry, but the Boltzmann correction $\tau^2\,\E_{\pdata}[\kbar\,\nabla\log(\pboltz/\pdata)] \neq 0$ pushes the system past $\pdata$ toward $\pboltz$.
    \item[(b)] $q = \pboltz$ is an exact equilibrium \emph{if and only if} positive samples also come from $\pboltz$ (i.e., $\pdata = \pboltz$).
\end{enumerate}
\end{proposition}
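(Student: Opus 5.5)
The plan is to work directly from the decomposition~\eqref{eq:full-decompose}, which writes $\hat{\bV} = (\bV_{\pdata}^+ - \bV_q^-) + \mathbf{C}$, with the correction term $\mathbf{C}(\bx) = \tau^2\,\E_{\pdata}[\kbar(\bx,\by)\nabla_\by\log(\pboltz/\pdata)(\by)]$. I take ``equilibrium'' to mean $\hat{\bV}(\bx)=0$ for all $\bx$ in the support of $q$. The point to notice is that $\mathbf{C}$ depends only on $\pdata$ and $\pboltz$, never on $q$. So each part reduces to deciding when a field built from fixed distributions vanishes identically.

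For (a), I set $q=\pdata$. Anti-symmetry gives $\bV_{\pdata,\pdata}=0$, hence $\hat{\bV}=\mathbf{C}$. It then suffices to show that $\mathbf{C}\not\equiv 0$ whenever $\pdata\neq\pboltz$. Write $h=\log(\pboltz/\pdata)$. Multiplying $\mathbf{C}(\bx)$ by the positive normalizer $\E_{\pdata}[k(\bx,\by)]$ gives
\[
\int k(\bx,\by)\,\pdata(\by)\nabla h(\by)\,d\by .
\]
This is the Gaussian convolution $k*(\pdata\nabla h)$. If it vanished for all $\bx$, injectivity of Gaussian convolution (its Fourier transform $e^{-\tau^2\|\xi\|^2/2}$ never vanishes) would force $\pdata\nabla h=0$ almost everywhere. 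That means $\nabla h=0$ on the support of $\pdata$. If the support is connected, $h$ is constant there, and normalization of both densities gives $\pboltz=\pdata$, a contradiction. Hence $\mathbf{C}\neq0$ somewhere. The claim that this pushes toward $\pboltz$ is interpretive. I would justify it by observing that $\mathbf{C}$ is a kernel-smoothed version of $\nabla\log(\pboltz/\pdata)$, which is the ascent direction of the likelihood ratio.

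For (b), with $q=\pboltz$ and positives from $\pdata$, I would redo the calculation without the stand-in. The force substitution replaces the attraction by $\tau^2\E_{\pdata}[\kbar\,\nabla\log\pboltz]$. Applying Theorem~\ref{thm:main} to $\pboltz$ gives $\bV_{\pboltz}^- = \tau^2\E_{\pboltz}[\kbar\,\nabla\log\pboltz]$. Sufficiency is then immediate: if $\pdata=\pboltz$, the two terms coincide and $\hat{\bV}=\bV_{\pboltz,\pboltz}=0$. For necessity, suppose $\hat{\bV}\equiv0$. After multiplying by the normalizers, this becomes an identity between the Gaussian smoothings of $\pdata\,\bF$ and $\pboltz\,\bF$, each weighted by a different normalizer $\E_{\pdata}[k]$ or $\E_{\pboltz}[k]$.

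The main obstacle is the converse in (b). Because the two normalizers differ, the vanishing of $\hat{\bV}$ does not immediately deconvolve into a pointwise identity. My first attempt is a cleaner formulation that avoids this issue. Apply Theorem~\ref{thm:main} to $\pdata$, so that $\hat{\bV}(\bx)=\bV_{\pdata}^+(\bx)-\bV_{\pboltz}^+(\bx)+\mathbf{C}(\bx)$. Then express $\bV_p^+(\bx)$ as $\tau^2\nabla_\bx\log(k*p)(\bx)$, which follows since $\nabla_\bx k = k(\by-\bx)/\tau^2$. The condition becomes
\[
\nabla\log\frac{k*\pdata}{k*\pboltz} = -\mathbf{C}/\tau^2 .
\]
I would then argue, by the same Fourier-injectivity step as in (a), that this fails unless $h$ is constant. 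Failing that, I would state the converse under a mild non-degeneracy assumption and prove that version instead.
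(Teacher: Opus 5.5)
Part (a) and the ``if'' half of (b) are correct and follow the paper's route. In (a) you go further than the paper: it simply asserts $\mathbf{C}\neq 0$, whereas your Gaussian-deconvolution argument proves it. One small correction there. The hypothesis you need is $\pdata>0$ on all of $\R^D$, which is implicit in applying Theorem~\ref{thm:main} to $\pdata$; connected support is not enough. Since $\pboltz>0$ everywhere, $h$ constant on a proper set $S$ only gives $\pdata=\pboltz/\pboltz(S)$ on $S$, not $\pdata=\pboltz$.

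The genuine gap is the ``only if'' direction of (b), and your proposed route cannot close it. The reformulation only adds and subtracts $\bV^+_{\pdata}$. By Proposition~\ref{prop:omega-controls} with $\omega=1$, $\mathbf{C}=\hat{\bV}^+_{\mathrm{force}}-\bV^+_{\pdata}$. So the equation $\bV^+_{\pdata}-\bV^+_{\pboltz}+\mathbf{C}=0$ is literally $\hat{\bV}^+_{\mathrm{force}}=\bV^+_{\pboltz}$ again, i.e.\ $k*(\pdata\bF)/(k*\pdata)=k*(\pboltz\bF)/(k*\pboltz)$. The $\nabla\log(k*\pdata)$ you isolate on the left reappears inside $-\mathbf{C}/\tau^2$ and cancels.

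Unlike in (a), this is not a single unknown convolved with $k$ and set to zero. After clearing denominators it reads $k*(\pdata\bF)=\mathbf{m}\,(k*\pdata)$, with the $\bx$-dependent multiplier $\mathbf{m}(\bx)=\E_{\pboltz}[\kbar(\bx,\by)\bF(\by)]$. Fourier injectivity therefore has nothing to act on. Concretely, set $r=\pdata/\pboltz$ and $\pi_\bx\propto k(\bx,\cdot)\,\pboltz$. Then $\hat{\bV}\equiv 0$ is exactly $\mathrm{Cov}_{\pi_\bx}\bigl(r(\by),\bF(\by)\bigr)=0$ for every $\bx$. What you need is a uniqueness theorem saying this forces $r$ to be constant.

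Such a result is available when $\bF$ is affine with invertible linear part, i.e.\ Gaussian $\pboltz$. Using $\E_p[\kbar(\bx,\by)\,\by]=\bx+\tau^2\nabla\log(k*p)(\bx)$, the condition becomes $\nabla\log(k*\pdata)=\nabla\log(k*\pboltz)$, and your deconvolution step finishes. For general $\bF$, the unspecified ``mild non-degeneracy assumption'' would have to be essentially this covariance injectivity, which is the claim itself.

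The paper does not supply the missing idea either. Its converse consists only of the remark that the normalizers $\E_{\pdata}[k]$ and $\E_{\pboltz}[k]$ differ, and that alone does not prevent the two normalized averages from coinciding. As written, necessity in (b) is unproved. You should either prove the covariance uniqueness under an explicit hypothesis on $\bF$, or restrict (b) to the ``if'' direction.
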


\begin{proof}
\textit{Part (a).}
At $q = \pdata$, the repulsion uses the same distribution as the attraction's data-driven component: $\bV_q^- = \bV_{\pdata}^-$.
By anti-symmetry, $\bV_{\pdata}^+(\bx) - \bV_{\pdata}^-(\bx) = 0$.
The Boltzmann correction term, however, depends only on the positive samples and survives:
\begin{equation}
    \hat{\bV}(\bx)\big|_{q=\pdata} = \tau^2\,\E_{\pdata}\!\left[\kbar(\bx,\by)\,\nabla_\by\log\frac{\pboltz(\by)}{\pdata(\by)}\right] \neq 0,
\end{equation}
driving the generator distribution beyond $\pdata$ toward $\pboltz$.

\textit{Part (b).}
When $\pdata = \pboltz$ and $q = \pboltz$, both positive and negative samples come from $\pboltz$.
The repulsion term is $\bV_q^-(\bx) = \E_{\pboltz}[\kbar(\bx,\by)(\by - \bx)]$.
By Theorem~\ref{thm:main} applied to $p = \pboltz$, together with the score--force identity~\eqref{eq:score-force}:
\begin{equation}
    \E_{\pboltz}[\kbar(\bx,\by)(\by - \bx)] = \tau^2 \cdot \E_{\pboltz}\!\left[\kbar(\bx,\by)\frac{\bF(\by)}{\kBT}\right] = \hat{\bV}_{\text{force}}^+(\bx).
    \label{eq:repulsion-equals-attraction}
\end{equation}
Therefore $\hat{\bV}(\bx) = \hat{\bV}_{\text{force}}^+(\bx) - \bV_q^-(\bx) = 0$.

Conversely, when $\pdata \neq \pboltz$, the attraction uses kernel weights normalized over $\pdata$ while the repulsion at $q = \pboltz$ uses kernel weights normalized over $\pboltz$.
These normalizations differ, so exact cancellation fails and $q = \pboltz$ is not an exact equilibrium; the true equilibrium lies between $\pdata$ and $\pboltz$.
\end{proof}

\begin{remark}
\label{rem:approx-equilibrium}
Proposition~\ref{prop:equilibrium} characterizes the \emph{idealized} force-substituted field.
For the practical Force-Aligned Kernel (Definition~\ref{def:force-kernel}), the equilibrium is further modulated: by Proposition~\ref{prop:boltzmann-reweight}, the effective weights target the tilted distribution $\pdata \cdot \pboltz^\gamma$ rather than $\pboltz$ exactly, with $\gamma \ll 1$ providing a partial but stable correction.
\end{remark}

\begin{remark}[Anti-Symmetry]
\label{rem:antisymmetry}
Force alignment breaks strict anti-symmetry since attraction and repulsion use different kernels.
However, unlike scaling relative magnitudes (which causes collapse~\citep{deng2026drifting}), our modification only changes weighting within attraction, keeping displacements unchanged.
Proposition~\ref{prop:equilibrium}(a) confirms this shift moves the equilibrium rather than breaking balance.
\end{remark}
% ---------- 4.2 Approximation Quality ----------
\subsection{Approximation Quality}
\label{sec:approximation}

\paragraph{Approximation quality.}
The Force-Aligned Kernel involves two levels of approximation.
First, the Boltzmann reweighting (Proposition~\ref{prop:boltzmann-reweight}) relies on a first-order Taylor expansion controlled by the parameter $\epsilon = \tau\|\nabla^2\mathcal{E}\|/\|\nabla\mathcal{E}\|$, which measures the kernel bandwidth relative to the length scale of force variation; the approximation is exact for locally affine potentials and improves as $\tau \to 0$.
Second, using $\pdata$ samples in place of $\pboltz$ yields an effective sampling distribution $\pdata \cdot \pboltz^\gamma / Z$, which interpolates between $\pdata$ ($\gamma = 0$) and a distribution tilted toward $\pboltz$ ($\gamma > 0$).
While no single $\gamma$ achieves full correction in general, any $\gamma > 0$ shifts the effective distribution in the correct direction.

% ============================================================================
%  5. EXPERIMENTS
% ============================================================================
\section{Experiments}
\label{sec:experiments}

We evaluate our methods on molecular conformation generation benchmarks, focusing on the Boltzmann debiasing capability on real molecular systems.

% ---------- 5.1 MD17 Ethanol Task Setup ----------
\subsection{MD17 Ethanol: Task, Dataset, and Evaluation}
\label{sec:md17}

\paragraph{Benchmark dataset.}
We validate our methods on the MD17 Ethanol dataset~\citep{chmiela2017machine}: a 9-atom molecule (C$_2$H$_6$O, 27D configuration space) with Cartesian coordinates, atomic forces, and energies from \emph{ab initio} molecular dynamics at 500\,K ($\kBT = 1.0$\,kcal/mol).

\paragraph{Biased training protocol.}
We deliberately use the \emph{first 3{,}000 frames} of the MD17 trajectory as training data.
These early frames are temporally correlated and structurally biased---the $h(r)$ distribution differs from the Boltzmann equilibrium by TVD${}=0.086$---providing a realistic test of debiasing capability.
A held-out set of 5{,}000 equilibrated frames (frames 10k--15k) serves as the Boltzmann reference.

\paragraph{Evaluation metrics.}
Beyond the standard $h(r)$ metrics (interatomic distance histogram error), the high-dimensional molecular setting demands metrics that capture \emph{per-molecule structural validity}:
\begin{itemize}
    \item \textbf{$h(r)$ MAE / TVD / $\mathcal{W}_2$}: interatomic distance histogram error, total variation distance, and 2-Wasserstein distance against the Boltzmann reference.
    \item \textbf{Stability}: fraction of generated molecules where all pairwise distances deviate $< 0.5$\,\AA{} from the reference mean.
    \item \textbf{Bond MAE}: mean absolute deviation of covalent bond lengths from the Boltzmann reference (only bonded pairs with $\bar{d} < 1.6$\,\AA).
    \item \textbf{Bond Stability}: fraction of samples with \emph{all covalent bonds} within 0.5\,\AA{} of reference---a stricter structural validity check that excludes noisy torsional degrees of freedom.
    \item \textbf{Per-type TVD}: separate distance distribution TVD for each atom-pair type (C--C, C--H, C--O, H--H, H--O), avoiding the masking effect of mixed-pair histograms.
\end{itemize}

% ---------- 5.2 Coordinate Space Results ----------
\subsection{Coordinate Space Results}
\label{sec:coordinate-results}

\paragraph{Methods and setup.}
We compare five methods in coordinate space using a single-seed ($s{=}42$) experiment:
\begin{itemize}
    \item \textbf{DSM}: VESDE diffusion with a Transformer score network (128d, 4 layers, 4 heads, ${\sim}823$K parameters), trained for 1{,}000 epochs with standard settings (lr${}=2{\times}10^{-4}$, cosine schedule, $\sigma_{\min}{=}0.1$, $\sigma_{\max}{=}5.0$). Inference uses the Predictor-Corrector sampler with 1{,}000 steps (2{,}000 network evaluations per sample).
    \item \textbf{Drifting}, \textbf{Force-Interpolated} ($\omega{=}0.01$), \textbf{Force-Aligned Kernel} ($\gamma{=}0.001$), and \textbf{FI+FK} (combined): MLP generator (6$\times$512, SiLU, ${\sim}1.36$M parameters), Gaussian kernel $\tau{=}1.0$, trained for 20{,}000 steps. Data is normalized by the global standard deviation for stable training.
\end{itemize}

\paragraph{Results.}
Table~\ref{tab:md17-results} presents the comprehensive evaluation.
Several striking findings emerge:

\begin{enumerate}
    \item \textbf{Iterative diffusion methods struggle with structural validity.}
    Despite reasonable $h(r)$ TVD scores (0.152), DSM produces molecules with Bond MAE $\approx 0.67$\,\AA{} and Bond Stability $< 1\%$.
    Inspection reveals that hydrogen atoms are systematically displaced: O--H bonds stretch from 0.98\,\AA{} to 2.37\,\AA{}, and C--H bonds from 1.11\,\AA{} to 1.80\,\AA.
    The Predictor-Corrector sampler with 1{,}000 steps fails to converge to physically valid molecular geometries in the 27-dimensional space, while the aggregate $h(r)$ histogram masks this failure by mixing 36 atom pairs.

    \item \textbf{Force-interpolated drifting achieves the best overall performance.}
    Force-Interpolated Drifting attains the lowest $h(r)$ TVD (0.139), lowest $\mathcal{W}_2$ (0.031), and maintains 97.5\% Bond Stability with Bond MAE of only 0.024\,\AA.
Force-interpolated drifting \emph{dominates across all distributional metrics} while preserving structural integrity.

    \item \textbf{Force information in displacement $>$ in kernel weights.}
    The per-type TVD analysis reveals the mechanism: Force-Interpolated Drifting achieves C--C TVD of 0.281 vs.\ Force-Aligned Kernel's 0.202, but dramatically better C--H TVD (0.156 vs.\ 0.332) and C--O TVD (0.133 vs.\ 0.165).
    In high-dimensional molecular systems, the force vector carries rich directional information (bond stretching, angle bending, torsional forces) that is better exploited through displacement interpolation than through the scalar force-displacement alignment in kernel weights.

    \item \textbf{Standard drifting preserves structure but inherits bias.}
    Drifting without forces achieves the best Bond Stability (99.5\%) and Bond MAE (0.015\,\AA), confirming that the one-shot MLP generator naturally produces valid molecules.
    However, its $h(r)$ TVD (0.237) is the worst among drifting methods, faithfully reproducing the biased training distribution without debiasing.

    \item \textbf{Combining Force-Interpolated Drifting and Force-Aligned Kernel (FI+FK) provides no significant benefit.}
    The combined method matches Force-Interpolated Drifting on $h(r)$ TVD (0.139) with marginally better $\mathcal{W}_2$ (0.030 vs.\ 0.031) and Stability (0.357 vs.\ 0.326), but slightly worse Bond Stability (95.8\% vs.\ 97.5\%).
    This confirms that once force information enters the displacement, additional kernel reweighting is redundant.

    \item \textbf{Inference speedup: Over 100×.}
    One-step drifting methods generate 1{,}000 samples in ${\sim}0.001$\,s, compared to ${\sim}4.2$\,s for the PC sampler (2{,}000 network evaluations).
    Training is also ${\sim}6\times$ faster (33\,s vs.\ 200\,s).
\end{enumerate}

\begin{table}[t]
\centering
\caption{Comprehensive evaluation on MD17 Ethanol (first 3{,}000 frames, biased training data). Metrics are computed against the Boltzmann reference (equilibrated frames 10k--15k). $h(r)$ MAE/TVD/$\mathcal{W}_2$ measure distributional accuracy; Stability and Bond metrics measure per-molecule structural validity. Best result per column is \textbf{bolded}; second-best is \underline{underlined}.}
\label{tab:md17-results}
\footnotesize
\setlength{\tabcolsep}{4pt}  % Reduce column spacing
\begin{tabular}{@{}lcccccccc@{}}
\toprule
& \multicolumn{3}{c}{Distribution (vs Boltzmann)} & \multicolumn{3}{c}{Structure (per-molecule)} & \multicolumn{2}{c}{Speed} \\
\cmidrule(lr){2-4} \cmidrule(lr){5-7} \cmidrule(lr){8-9}
Method & $h(r)$ MAE $\downarrow$ & $h(r)$ TVD $\downarrow$ & $\mathcal{W}_2$ $\downarrow$ & Stab $\uparrow$ & Bond MAE $\downarrow$ & Bond Stab $\uparrow$ & Train & Infer \\
\midrule
DSM & 0.038 & 0.152 & 0.045 & 0.000 & 0.669 & 0.007 & 194s & 4.2s \\
\midrule
Drifting & 0.059 & 0.237 & 0.034 & 0.311 & \textbf{0.015} & \textbf{0.995} & 32s & 0.001s \\
FK & 0.057 & 0.228 & 0.033 & 0.303 & \textbf{0.015} & \underline{0.978} & 33s & 0.001s \\
FI & \textbf{0.035} & \textbf{0.139} & \underline{0.031} & \underline{0.326} & 0.024 & 0.975 & 33s & 0.001s \\
FI+FK & \textbf{0.035} & \textbf{0.139} & \textbf{0.030} & \textbf{0.357} & 0.025 & 0.958 & 34s & 0.001s \\
\bottomrule
\end{tabular}
\end{table}

\paragraph{Why $h(r)$ TVD alone is misleading.}
The MD17 results expose a critical limitation of the standard $h(r)$ metric: diffusion methods achieve reasonable $h(r)$ TVD yet struggle with structural validity.
This occurs because $h(r)$ aggregates all 36 atom pairs into a single histogram, and the 15 non-bonded H--H pairs (42\% of total) can dominate the score.
The per-type TVD and Bond MAE metrics introduced here provide a more faithful evaluation of molecular generation quality, revealing that distributional accuracy must be complemented by per-molecule structural validity checks.

% ---------- 5.3 Feature-Space Results ----------
\subsection{Feature-Space Results: Distance-Coordinate Representation}
\label{sec:feature-space-results}

We extend the force-guided methods to an internal-coordinate representation using pairwise distances (a rotationally invariant feature space).
We evaluate the feature-space methods derived in Section~\ref{sec:feature-drift}---exact feature-space force (Eq.~\ref{eq:exact-G}), scale-normalized Force-Interpolated Drifting (Eq.~\ref{eq:feature-force-interp-norm}), and the feature-space Force-Aligned Kernel (Eq.~\ref{eq:feature-force-kernel-norm})---using the MLP distance-space generator that naturally produces structurally valid molecules (Bond MAE $= 0.006$\,\AA, Bond Stability $= 100\%$) when trained without force labels.

\paragraph{Motivation.}
When using a naive per-pair projection $\bG^{\text{proj}} = J\bF$ instead of the exact feature-space force, Force-Interpolated Drifting degrades $h(r)$ TVD from $0.194$ to $0.317$---a catastrophic failure despite identical training data and setup.
This reveals three key issues:
(i) the naive per-pair projection ignores metric coupling between pairs sharing atoms;
(ii) the force scale $\tau_\phi^2\|\bG\|/\kBT \approx 300\times$ the data displacement;
(iii) the force vector $\bG$ does not lie on the data manifold in distance space.
Here, we employ the exact feature-space force~\eqref{eq:exact-G} with scale normalization~\eqref{eq:feature-force-interp-norm} for all results, with the Force-Aligned Kernel~\eqref{eq:feature-force-kernel-norm} as an alternative approach.

\paragraph{Setup.}
All variants share the same architecture as Drifting-MLP-Dist: an MLP generator ($6 \times 512$, SiLU, 1.36M parameters), trained for 20,000 steps with the same hyperparameters (batch size 256, 512 positive samples, $\text{lr} = 10^{-3}$, cosine schedule).
The feature-space bandwidth $\tau_\phi = 1.61$ is set by the median heuristic on pairwise distances in feature space.
The exact feature-space force $\bG$ (Eq.~\ref{eq:exact-G}) is precomputed once for all $N = 2{,}700$ training samples ($< 0.1$\,s on GPU, Table~\ref{tab:feature-cost}).
For the scale-normalized FI, $\overline{\|\bG\|} = 59.5$ (exact) and $196.2$ (naive), giving scale factors $\tau_\phi/\overline{\|\bG\|} \approx 0.027$ and $0.008$ respectively.
We sweep $\omega \in \{0.1, 0.3, 0.5\}$ for FI and $\gamma \in \{0.1, 0.5, 1.0\}$ for FK.

\paragraph{Results.}
Table~\ref{tab:feature-space-results} presents the results.
Several striking findings emerge:

\begin{enumerate}
    \item \textbf{The feature-space Force-Aligned Kernel achieves the best overall performance.}
    FK in feature space with $\gamma = 0.1$ attains $h(r)$ TVD $= 0.089$---a \textbf{54\% improvement} over Drifting in feature space ($0.194$) and \textbf{36\% improvement} over FI in coordinate space ($0.139$)---while maintaining perfect Bond MAE ($0.006$\,\AA) and Bond Stability ($100\%$).
    This makes it the only method to simultaneously optimize both distributional accuracy and structural validity across all experiments.

    \item \textbf{Force-Interpolated Drifting destroys molecular structure in distance space.}
    All FI-Dist variants exhibit catastrophically low Bond Stability: even the mildest setting ($\omega = 0.1$) reduces Bond Stability from $100\%$ to ${\sim}49\%$.
    This failure is fundamental to the FI mechanism in distance space: directly adding force-derived displacements pushes the target away from the manifold of valid molecular geometries, regardless of whether the force is computed exactly or via naive projection.
    Directly adding a force-derived displacement $\bV_{\text{force}} = (\tau_\phi / \overline{\|\bG\|})\sum_j \kbar_j \bG_j$ pushes the target away from the data manifold in distance space, producing 36-dimensional distance vectors that do not correspond to valid molecular geometries.

    \item \textbf{The force kernel preserves structure because it modifies only the weights, not the displacement.}
    The FK-Dist method modifies softmax weights via $\ell_j = -\|\phi_x - \phi_j\|^2/(2\tau_\phi^2) + \gamma \cdot A_j/\text{std}(A)$ while the displacement remains $\bV^+ = \sum_j w_j(\phi_j - \phi_x)$---a convex combination of real data displacements.
    This guarantees that the target lies in the convex hull of observed distance vectors, preserving geometric validity by construction.
    Force information enters as a \emph{soft attention mechanism} that upweights Boltzmann-relevant neighbors without distorting the distance-space geometry.

    \item \textbf{Smaller $\gamma$ is better: the force kernel should be a subtle correction.}
    $\gamma = 0.1$ yields TVD $= 0.089$; $\gamma = 0.5$ gives $0.111$; $\gamma = 1.0$ gives $0.134$.
    Larger $\gamma$ shifts too much probability mass toward force-aligned neighbors, causing the generator to under-represent geometrically proximate but force-misaligned configurations.
    All three settings outperform the drifting baseline, confirming the robustness of the approach.

    \item \textbf{Force information reverses its effectiveness between coordinate and distance space.}
    In coordinate space, FI dominates (TVD $0.139$ vs FK's $0.228$); in feature space, FK dominates (TVD $0.089$ vs FI's best $0.170$ at $\omega=0.1$).
    This reversal arises because in coordinate space, the Cartesian force $\bF$ is a physically meaningful displacement direction (bond stretching, angle bending) that blends naturally with data displacements, while the scalar force-kernel reweighting is less effective in the high-dimensional, sparsely populated coordinate space.
    In distance space, the feature-space force $\bG$ is an abstract vector in $\R^{36}$ that does not correspond to a realizable molecular motion, making direct displacement interpolation harmful---but the alignment scalar $A_j = \bG_j \cdot (\phi_j - \phi_x)$ remains a physically meaningful measure of thermodynamic favorability for kernel reweighting.
\end{enumerate}

\begin{table}[t]
\centering
\caption{Feature-space force methods on MD17 Ethanol (3k biased frames). All methods use the MLP distance-space generator with scale-normalized exact feature-space forces (Eq.~\ref{eq:exact-G}). Baselines: FI in coordinate space and standard Drifting. Bold: best per column; underline: second-best.}
\label{tab:feature-space-results}
\footnotesize
\setlength{\tabcolsep}{3.5pt}  % Reduce column spacing
\begin{tabular}{@{}llcccccc@{}}
\toprule
& & \multicolumn{3}{c}{Distribution (vs Boltzmann)} & \multicolumn{3}{c}{Structure (per-molecule)} \\
\cmidrule(lr){3-5} \cmidrule(lr){6-8}
Method & Params & MAE $\downarrow$ & TVD $\downarrow$ & $\mathcal{W}_2$ $\downarrow$ & Stab $\uparrow$ & Bond MAE $\downarrow$ & Bond Stab $\uparrow$ \\
\midrule
\multicolumn{8}{l}{\textit{Baselines}} \\
FI-Coord & $\omega{=}0.01$ & 0.035 & 0.139 & \underline{0.031} & \textbf{0.326} & 0.024 & 0.975 \\
Drifting-Dist & -- & 0.048 & 0.194 & \underline{0.031} & \underline{0.287} & \textbf{0.006} & \textbf{1.000} \\
\midrule
\multicolumn{8}{l}{\textit{Force-Interpolated Drifting (Eq.~\ref{eq:feature-force-interp-norm})}} \\
FI-Dist & $\omega{=}0.1$ & 0.042 & 0.170 & 0.068 & 0.004 & 0.033 & 0.491 \\
FI-Dist & $\omega{=}0.3$ & 0.049 & 0.195 & 0.107 & 0.000 & 0.078 & 0.037 \\
FI-Dist & $\omega{=}0.5$ & 0.052 & 0.208 & 0.130 & 0.000 & 0.108 & 0.012 \\
\midrule
\multicolumn{8}{l}{\textit{Force-Aligned Kernel (Eq.~\ref{eq:feature-force-kernel-norm})}} \\
FK-Dist & $\gamma{=}0.1$ & \textbf{0.022} & \textbf{0.089} & \textbf{0.023} & 0.228 & \textbf{0.006} & \textbf{1.000} \\
FK-Dist & $\gamma{=}0.5$ & \underline{0.028} & \underline{0.111} & 0.035 & 0.162 & \underline{0.008} & \underline{0.993} \\
FK-Dist & $\gamma{=}1.0$ & 0.034 & 0.134 & 0.049 & 0.104 & 0.011 & 0.925 \\
\midrule
\multicolumn{8}{l}{\textit{Combined}} \\
FI+FK-Dist & $\omega{=}0.3$, $\gamma{=}0.5$ & 0.049 & 0.195 & 0.111 & 0.000 & 0.079 & 0.107 \\
\bottomrule
\end{tabular}
\end{table}

\paragraph{Diagnostic: remaining error sources.}
FK-Dist ($\gamma = 0.1$) achieves TVD $= 0.089$, close to but not matching the Boltzmann reference.
To understand the residual error, we analyze the training data itself as a baseline.
The biased 3k training data has TVD $= 0.086$ against the Boltzmann reference (this shrinks to $0.058$ at 10k frames and $0.042$ at 30k frames).
Thus FK-Dist's TVD of $0.089$ is only marginally above the training data's own distributional distance from equilibrium ($+3.5\%$), indicating that the method extracts nearly all available distributional information from the biased data.

The residual error pattern of FK-Dist differs from the training data bias: the training data is biased toward \emph{certain conformational states} (early trajectory frames), while the generator's residual error is concentrated at \emph{bond-length peaks}---slight over-sharpening at C--H (${\sim}1.1$\,\AA) and under-broadening at non-bonded distances ($2.5$--$4.0$\,\AA).
This suggests that the remaining error originates from the MLP generator's tendency to over-regularize bond-length distributions in the distance feature space, rather than from a failure of force guidance.
Further improvement would likely require either more training data, a more expressive generator architecture, or the metric correction term (Eq.~\ref{eq:metric-correction}) omitted in the current formulation.

\paragraph{Summary.}
The feature-space Force-Aligned Kernel resolves the failure of naive feature-space Force-Interpolated Drifting and establishes a new state of the art on biased MD17 Ethanol: $h(r)$ TVD $= 0.089$ with perfect bond accuracy (Bond MAE $= 0.006$\,\AA, Bond Stability $= 100\%$).
The key insight is that force information in the distance feature space should enter through \emph{kernel reweighting} (which preserves the data manifold geometry) rather than through \emph{displacement interpolation} (which can produce distance vectors outside the feasible molecular geometry space).
This contrasts with coordinate space, where Force-Interpolated Drifting is preferred---a reversal explained by the different roles of the force vector in each representation.

% ============================================================================
%  7. CONCLUSION
% ============================================================================
\section{Conclusion}
\label{sec:conclusion}

This work presents a principled framework for incorporating molecular forces into one-step generative models, enabling fast and accurate sampling from the Boltzmann distribution.
Our key contributions are:

\paragraph{Force-guided generation via Drifting Identity.}
We derive the Drifting Score and Force Identities (Theorem~\ref{thm:main}, Corollary~\ref{cor:decomposition}) that connect the drifting field to any target distribution, establishing a principled mathematical foundation for force guidance.
This leads to two complementary mechanisms: \emph{Force-Interpolated Drifting (FI)} blends force information into sample displacements, while \emph{Force-Aligned Kernel (FK)} reweights neighbors toward Boltzmann-relevant samples.
In coordinate space on MD17 Ethanol, FI achieves state-of-the-art distributional accuracy ($h(r)$ TVD $= 0.139$, $\mathcal{W}_2 = 0.031$) with near-perfect structural validity (Bond Stability $= 97.5\%$) in a single generation step---${\sim}2{,}000\times$ faster than iterative methods.

To contextualize this speedup: recent work on Potential Score Matching (PSM)~\citep{guo2025potential}, conducted on an NVIDIA A100 GPU, requires 8 hours of training and 10 minutes to generate 1,000 batches (1,000,000 generation steps). In comparison, traditional MD simulation takes approximately 31 hours and 18 minutes to generate 210,000 steps on the same hardware---a trajectory length commonly used in MD17 datasets. Our method, implemented on an NVIDIA RTX 5090 GPU, achieves speedups of over 2,000× relative to PSM. Since PSM itself offers approximately 890× speedup over conventional MD, our method thus provides a cumulative speedup of \textbf{more than 1 million times} faster than traditional molecular dynamics simulation, completing in milliseconds what would require over 31 hours with conventional MD approaches.

\paragraph{Representation-aware force guidance.}
We demonstrate a striking representation-dependence in force incorporation: FI excels in coordinate space because forces naturally encode bond-level directional information, while FK dominates in feature space (pairwise distances) where direct force interpolation risks producing geometrically infeasible samples.
Our feature-space extension introduces the \emph{exact} force $\bG = J(J^\top J)^+\bF$ (Eq.~\ref{eq:exact-G}), which properly accounts for metric coupling in the distance feature space.
This enables further improvements: FK in feature space achieves $h(r)$ TVD $= 0.089$ with perfect bond stability (Bond Stability $= 100\%$), demonstrating the complementary strengths of the two paradigms.

\paragraph{Implications and limitations.}
The coordinate/feature-space dichotomy reveals a fundamental principle: optimal force guidance depends on representation.
This finding extends beyond Drifting Models---it suggests that any one-step generator should carefully match the force incorporation method to its coordinate system and the implicit geometry of the feature space.
Precomputing the exact feature-space force adds negligible overhead ($<0.01\%$ of training time), making it practical for real applications.

Future work should explore: (i)~dimension-adaptive kernels for robustness at larger molecular scales, (ii)~extension to protein and materials systems where force guidance becomes increasingly important, (iii)~learned or task-specific force weighting schemes, (iv)~integration with modern neural network potentials (MACE, NequIP, EquiformerV2), and (v)~incorporating metric correction terms (Remark~\ref{rem:metric}) for fully exact Boltzmann sampling in internal coordinates.
The per-bond structural metrics introduced here may also benefit other generative models by providing finer-grained evaluation criteria beyond aggregate distribution matching.

% ============================================================================
%  REFERENCES
% ============================================================================
\bibliographystyle{plainnat}

% ============================================================================

\end{document}